\pgfplotsset{compat=1.18}
\newcommand*\dif{\mathop{}\!\mathrm{d}}
\def\anonymizeymize{0}
\newcommand{\anonymize}[1]{}
\newcommand{\anonymize}[1]{#1}
\definecolor{darkgreen}{rgb}{0.0, 0.5, 0.0}
\def\comments{0}
\newcommand{\pras}[1]{\textcolor{BrickRed}{\sf{[#1 --PR]}}}
\newcommand{\kangning}[1]{\textcolor{orange}{\sf{[#1 --KW]}}}
\newcommand{\moses}[1]{\textcolor{darkgreen}{\sf{[#1 --MC]}}}
\newcommand{\alex}[1]{\textcolor{blue}{\sf{[#1 --AL]}}}
\newcommand{\edit}[2]{\textcolor{red}{#1}}
\newcommand{\pras}[1]{}
\newcommand{\kangning}[1]{}
\newcommand{\moses}[1]{}
\newcommand{\alex}[1]{}
\newcommand{\edit}[2]{#1}
\def\colorful{1}
\DeclareMathOperator{\rank}{rank}
\newcommand{\cg}{\succ}
\newcommand{\cl}{\prec}
\newcommand{\cgeq}{\succeq}
\renewcommand{\hat}[1]{\widehat{#1}}
\renewcommand{\sf}[1]{\textsf{#1}}
\DeclarePairedDelimiter{\ceil}{\lceil}{\rceil}
\newtheorem*{rep@theorem}{\rep@title}
\newcommand{\newreptheorem}[2]{
\newenvironment{rep#1}[1]{
 \def\rep@title{#2 \ref{##1}}
 \begin{rep@theorem}\itshape}
 {\end{rep@theorem}}}
\theoremstyle{plain}
\newtheorem*{rep@claim}{\rep@title}
\newcommand{\newrepclaim}[2]{
\newenvironment{rep#1}[1]{
 \def\rep@title{#2 \ref{##1}}
 \begin{rep@claim}\itshape}
 {\end{rep@claim}}}
\theoremstyle{plain}
\newtheorem{Alg}{Algorithm}
\crefname{equation}{eq.}{eqs.}
\begin{document}

\definecolor{myblue}{rgb}{0.15, 0.1, 0.75}
\definecolor{mygreen}{rgb}{0.15, 0.55, 0.1}
\definecolor{mypink}{rgb}{0.75, 0.05, 0.55}

\hypersetup{
    linkcolor = mygreen,
    citecolor = myblue,
    urlcolor = mypink
}

\title{
Six Candidates Suffice to Win a Voter Majority \vspace{8pt}
}

\author{
\anonymize{
Moses Charikar\\\hspace{0pt}{{\sl Stanford University}}
\and Alexandra Lassota \\\hspace{0pt}{{\sl Eindhoven University of Technology}}
\and Prasanna Ramakrishnan \\\hspace{0pt}{{\sl Stanford University}}
\and Adrian Vetta\\\hspace{0pt}{{\sl McGill University}}
\and Kangning Wang\\\hspace{0pt}{{\sl Rutgers University}}
}
}
\anonymize{
{\let\thefootnote\relax\footnotetext{Emails: \texttt{moses@cs.stanford.edu}, ~\texttt{a.a.lassota@tue.nl}, ~\texttt{pras1712@stanford.edu}, ~\texttt{adrian.vetta@mcgill.ca}, ~\texttt{kn.w@rutgers.edu}.}}
}

\date{}

\pagenumbering{gobble}
\thispagestyle{empty}
\maketitle

\begin{abstract}

A cornerstone of social choice theory is Condorcet's paradox which says that in an election where $n$ voters rank $m$ candidates it is possible that, no matter which candidate is declared the winner, a majority of voters would have preferred an alternative candidate. Instead, can we always choose a small \emph{committee} of winning candidates that is preferred to any alternative candidate by a majority of voters?

Elkind, Lang, and Saffidine raised this question and called such a committee a \emph{Condorcet winning set}. They showed that winning sets of size $2$ may not exist, but sets of size logarithmic in the number of candidates always do. In this work, we show that Condorcet winning sets of size $6$ always exist, regardless of the number of candidates or the number of voters. More generally, we show that if $\frac{\alpha}{1 - \ln \alpha} \geq \frac{2}{k + 1}$, then there always exists a committee of size $k$ such that less than an $\alpha$ fraction of the voters prefer an alternate candidate. These are the first nontrivial positive results that apply for all $k \geq 2$.

Our proof uses the probabilistic method and the minimax theorem, inspired by recent work on \emph{approximately stable committee selection}. We construct a distribution over committees that performs sufficiently well (when compared against any candidate on any small subset of the voters) so that this distribution must contain a committee with the desired property in its support.

\end{abstract}

\pagenumbering{arabic}

\section{Introduction}\label{sec:intro}
Voting is a versatile model for the aggregation of individual preferences to reach a collective decision. Disparate situations, such as constituents choosing representatives, organizations hiring employees, judges choosing prize winners, and even friends choosing games to play, can all be understood as a group of voters choosing from a pool of candidates. Voting theory seeks to understand how winning candidates can be selected in a fair and representative manner.

One of the longest known challenges with voting is \textit{Condorcet's paradox}, discovered by Nicolas de Condorcet around the French Revolution \cite{condorcet1785essai}.\footnote{It is plausible that in early academic explorations of voting, 13th-century philosopher Ramon Llull had already discovered the possibility of this paradoxical situation \cite{llull1274artifitium,hagele2000lulls}.} 
The paradox is that in an election where voters have ranked preferences over candidates, the preferences of the ``majority'' can be contradictory --- no matter which candidate is declared the winner, a majority of the voters would have preferred another candidate. In fact, the contradiction can be even more dramatic, with ``majority'' replaced by a fraction arbitrarily close to 1. An illustrative example is when the voters have cyclic preferences as, for example, displayed in \Cref{tab:cyclic}.

\begin{table}[h]
\centering
\begin{tabular}{cccccc}
\hline
$v_1$ & $v_2$ & $v_3$ & $v_4$ & $v_5$ & $v_6$\\
\hline
1 & 2 & 3 & 4 & 5 & 6\\
2 & 3 & 4 & 5 & 6 & 1\\
3 & 4 & 5 & 6 & 1 & 2\\
4 & 5 & 6 & 1 & 2 & 3\\
5 & 6 & 1 & 2 & 3 & 4\\
6 & 1 & 2 & 3 & 4 & 5\\
\hline
\end{tabular}
\caption{An election where voters have cyclic preferences. The column headed with $v_i$ represents the $i$th voter's ranking of the candidates (labeled $1, 2, \dots, 6$ from top to bottom). For each candidate, another candidate is preferred by 
every voter except one.}
\label{tab:cyclic}
\end{table}

Though it is impossible to always find a single candidate that is always preferred over the others by a majority (called a \textit{Condorcet winner}), one hope is that relaxations of this condition are still possible to achieve. A natural relaxation arises in the setting of \emph{committee selection}, where rather than choosing a single winner, the goal is to choose a \textit{committee} of $k$ winners. For example, a political system may have districts with multiple representatives, organizations may make many hires at once, and friends might play more than one game in an evening. Another view is that committee selection can be used as a filtering step in a process with more than one round, like primaries or runoffs, choosing interviewees for a position, or nominations for a prize.

In this context, Elkind, Lang, and Saffidine \cite{elkind2011choosing,elkind2015condorcet}
asked: is it always possible to find a small committee of candidates such that no other candidate is preferred by a majority of voters over each member of the committee? They called this committee-analogue of a Condorcet winner a \emph{Condorcet winning set}, and defined the \emph{Condorcet dimension} of an election as the size of its smallest Condorcet winning set. For example, the election depicted in \Cref{tab:cyclic} has Condorcet dimension 2, since any pair of diametrically opposite candidates such as $\{3, 6\}$ would be a Condorcet winning set. More generally, \cite{elkind2015condorcet} raised the following question for an arbitrary threshold of $\alpha$ in place of $\frac12$, and a target committee size $k$.

\begin{question}[\cite{elkind2015condorcet}]\label{q:main}
A committee $S$ is \textit{$\alpha$-undominated} if for all candidates $a \notin S$, less than an~$\alpha$~fraction of voters prefer $a$ over each member of $S$. 
For what values of $k \in \mathbb{Z}^+$ and $\alpha \in (0, 1]$ does every election have an $\alpha$-undominated committee of size $k$?
\end{question}

In particular, we would like to know, for each $k$, what is the smallest $\alpha$ for which $\alpha$-undominated committees of size $k$ always exist (and, equivalently, for each $\alpha$, the smallest $k$ such that these committees always exist). %

Condorcet's paradox (or rather, its aforementioned generalization) shows that for $k = 1$ and any $\alpha$ bounded away from 1, there are elections with no $\alpha$-undominated singleton candidates. 
For the threshold of $\alpha = \frac12$, \cite{elkind2015condorcet} constructed instances
with Condorcet dimension 3 
by taking the Kronecker product of two elections with cyclic preferences (see \Cref{tab:dim3}). This construction can be easily extended to give a lower bound of $\frac{2}{k + 1}$ on the smallest $\alpha$ such that there always exists an $\alpha$-undominated committee of size $k$ (see \Cref{sec:lbs}). They also showed that an election with $m$ candidates has Condorcet dimension at most $\ceil{\log_2 m}$; to see this, note that some candidate beats a majority of the other candidates, so we can iteratively add such a candidate to our committee and remove all the candidates that it beats. %

\subsection{Our Contributions} %

We prove that every election has Condorcet dimension at most 6. This result is a corollary of our main theorem, which gives a nontrivial existence result for $\alpha$-undominated committees of size $k \geq 2$. We note that the final result we prove (\Cref{thm:stronger}) is stronger, but we start with the approximation below as it is easier to get a handle on. (For a comparison, see \Cref{tab:best-alpha} and \Cref{fig:comparison-plot}.)

\begin{theorem}\label{thm:main}
If $\frac{\alpha}{1 - \ln \alpha} \geq \frac{2}{k + 1}$, then in any election, there exists an $\alpha$-undominated committee of size $k$.
\end{theorem}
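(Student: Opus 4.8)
The plan is to combine the minimax theorem (LP duality) with the probabilistic method, following the strategy outlined in the abstract. First I would recast the conclusion as a covering statement: writing $\mathrm{bad}_a(S)$ for the set of voters who rank every member of $S$ below a candidate $a$, a committee $S$ of size $k$ is $\alpha$-undominated exactly when $|\mathrm{bad}_a(S)| < \alpha n$ for every $a$; equivalently, for every candidate $a$ and every voter set $T$ with $|T| = \lceil \alpha n \rceil$, some voter in $T$ ranks some member of $S$ at least as high as $a$. I would then set up the zero-sum game in which the committee player chooses a size-$k$ committee $S$, the adversary chooses a pair $(a,T)$ with $|T| = \lceil \alpha n\rceil$, and the adversary's payoff is $1$ if $T \subseteq \mathrm{bad}_a(S)$ and $0$ otherwise. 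A pure committee has best-response value $0$ precisely when it is $\alpha$-undominated, so it suffices to exhibit a distribution $\mathcal D$ over committees whose support contains such a committee.

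By the minimax theorem, the value of this game equals $\max_\pi \min_S \mathbb{E}_{(a,T) \sim \pi}[\text{payoff}]$, where $\pi$ ranges over adversary distributions. The core of the argument is to bound this value by constructing, for each fixed $\pi$, a good randomized committee: fix a distribution $p$ over candidates (to be chosen as a function of the voter weights that $\pi$ induces) and let $S$ be $k$ independent draws from $p$. For a fixed $(a,T)$, the event $T \subseteq \mathrm{bad}_a(S)$ is exactly the event $S \cap W = \emptyset$ with $W = \bigcup_{v \in T} U_v^a$, where $U_v^a$ denotes the set of candidates that $v$ ranks at or above $a$; its probability is $(1 - p(W))^k \le e^{-k\,p(W)}$. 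Since $|T| \ge \alpha n$ and $W \supseteq U_v^a$ for every $v \in T$, averaging over $v \in T$ and then over $\pi$ reduces the problem to choosing $p$ so that a suitable weighted sum of terms of the form $e^{-k\,P_v(r)}$ is small, where $P_v$ is the cumulative distribution function of $p$ read along voter $v$'s ranking and $r$ is the rank of $a$. Optimizing $p$ --- morally a water-filling choice that concentrates mass on the candidates that many weighted voters rank near the top, while staying spread out enough that the $k$ i.i.d.\ draws do not collapse onto too few distinct candidates --- is exactly where both the logarithmic term $1 - \ln\alpha$ and the constant $2$ (matching the lower bound of \Cref{sec:lbs}) enter, via the estimate $(1-x)^k \le e^{-kx}$ and an integral/harmonic-sum computation.

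The step I expect to be the main obstacle is the final combining step: passing from ``$\mathcal D$ rarely fails against any single pair $(a,T)$'' to ``some committee in the support of $\mathcal D$ succeeds against every pair simultaneously.'' A crude union bound over all $(a,T)$ is hopeless --- there are $\binom{n}{\lceil\alpha n\rceil}$ choices of $T$ --- and $|\mathrm{bad}_a(S)|$ does not concentrate for a random committee, so the argument must use structure. The key fact to exploit is that the number of ``genuinely independent threats'' a committee must defend against is only about $1/\alpha$: the voter sets $T$ witnessing distinct dangerous candidates must overlap substantially once there are more than $1/\alpha$ of them, and a committee member placed high in the ranking of a shared voter neutralizes several threats at once. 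This is precisely the regime in which $k \gtrsim (1 - \ln\alpha)/\alpha$ committee slots suffice, so getting this accounting to interface cleanly with the minimax value bound --- and recovering exactly the constant $\frac{2}{k+1}$ --- is where the real work lies; the estimates with $(1-x)^k \le e^{-kx}$ and the optimization of $p$ are comparatively routine.
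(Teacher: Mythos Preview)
Your high-level framework (minimax plus probabilistic method, adversary picks both a candidate and a small voter set) matches the paper, but the specific game you set up has a gap that you yourself flag and do not close: with the \emph{binary} payoff $\mathbf{1}[T \subseteq \mathrm{bad}_a(S)]$, bounding the game value below $1$ only tells you that for each $(a,T)$ \emph{some} committee in the support survives, and there is no way to upgrade this to a single committee surviving all $(a,T)$ simultaneously. Your remarks about ``only $1/\alpha$ genuinely independent threats'' are suggestive but do not constitute an argument; in particular, nothing about the structure of $\mathcal{D}$ produced by minimax guarantees that the threats interact in the way you hope.

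The paper sidesteps this obstacle entirely by changing the payoff to be \emph{additive over voters} rather than an all-or-nothing indicator. Fixing the committee distribution $\Delta$, it defines $\rank_v(a) = \Pr_{S'\sim\Delta}[a \succ_v S']$ and filters through an activation function $g$ (here $g(x)=x^{1/k}$). The minimax argument (Lemma~\ref{lem:low}) shows there is a $\Delta$ with $\frac1n\sum_{v\in U} g(\rank_v(a)) < \int_{1-\alpha}^1 g(x^k)\,dx$ for every candidate $a$ and every $\alpha n$-subset $U$. Separately (Lemma~\ref{lem:high}), a probabilistic-method argument on a \emph{single real-valued functional} --- namely $\min_{|U|=\alpha n}\frac1n\sum_{v\in U} g(\rank_v(S))$ --- produces one committee $S$ with $\frac1n\sum_{v\in U} g(\rank_v(S)) > \int_0^\alpha g(x)\,dx$ for every such $U$. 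The punchline is that if some $a$ $\alpha$-dominated $S$, then on the witnessing set $U$ one has $\rank_v(a)\ge\rank_v(S)$ \emph{pointwise}, hence the two sums are comparable, and the integral condition $\int_0^\alpha g(x)\,dx \ge \int_{1-\alpha}^1 g(x^k)\,dx$ yields a contradiction. No union bound is ever needed: the ``combining step'' you were worried about is absorbed into the monotonicity of ranks. The computation verifying that $\frac{\alpha}{1-\ln\alpha}\ge\frac{2}{k+1}$ implies the integral condition with $g(x)=x^{1/k}$ is the routine part.
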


For the specific threshold of $\alpha = \frac12$, \Cref{thm:main} applies as long as $k \geq 3 + 4 \ln 2 \approx 5.77$, and so any election has Condorcet 
dimension
at most $6$ (which is not far from the lower bound of $3$). Taking $k = 2$, \Cref{thm:main} implies that there always exists a pair of candidates such that no third candidate is preferred by more than roughly $80\%$ of the voters. Even replacing $80\%$ with $99\%$, this was previously unknown.
These results show that just by having a few winners instead of one, the most dramatic failures of Condorcet's paradox are avoidable. We emphasize that these results hold for \textit{any election}, regardless of the number of voters, the number of candidates, or the preferences that the voters have over candidates.

Our starting point for proving \Cref{thm:main} is the observation that \Cref{q:main} is closely linked to the problem of \emph{approximate stability} in committee selection \cite{jiang2020approximately}. The principle behind stability is that a subset of voters should have control over a subset of the committee of proportional size. That is, a committee of size $k$ is \emph{stable} (also referred to as \emph{in the core} \cite{scarf1967core,foley1970lindahl,DBLP:conf/sigecom/FainMN18}) if the fraction of voters that prefers any committee of size $k'$ is less than $\frac{k'}{k}$. We note that in this setting, voters have preferences over \textit{committees} rather than candidates. This more expressive space of preferences gives it the power to model a wide variety of preference structures, such as approval voting and participatory budgeting.

Unfortunately, in many settings, stable committees do not always exist. To remedy this, \cite{jiang2020approximately} put forth the following approximate notion of stability, and showed the surprising result that for any monotone preference structure and any $k$, a $32$-stable committee of size $k$ exists.

\begin{definition}[Approximately stable committees \cite{jiang2020approximately}]\label{def:approx-stable}
A committee $S$ of $k$ candidates is \textit{$c$-stable} if for any committee $S'$ of size $k'$, the fraction of voters that prefers $S'$ over $S$ is less than~$c\cdot \frac{k'}{k}$. 
\end{definition}

Consider the natural preference order over committees induced by rankings over candidates, where $v$ prefers $S'$ over $S$ if and only if she prefers her favorite candidate in $S'$ over her favorite in $S$. A simple observation (explained more fully in \Cref{sec:equiv}) shows that a committee of size $k$ is $c$-stable if and only if it is  $\frac{c}{k}$-undominated. For this ranked preference structure, the constant of $32$ in the result of \cite{jiang2020approximately} can be improved to $16$ using the existence of stable lotteries for these preferences \cite{DBLP:journals/teco/ChengJMW20}. Then, as a black box, \cite{jiang2020approximately} implies that $\frac{16}{k}$-undominated committees of size $k$ always exist, which in turn implies that we can always find Condorcet winning sets of size at most $32$. Since this conclusion follows easily from \cite{jiang2020approximately}, we attribute the first constant upper bound on the size of Condorcet winning sets to their work.

One can interpret the approximately stable committee problem as a version of \Cref{q:main} focused on the asymptotics of $\alpha$ as the committee size $k$ grows large.  For this purpose, \cite{jiang2020approximately} implies a result that is optimal up to a constant factor, but it says nothing nontrivial for committees of size at most $16$. %
In contrast, \Cref{thm:main} gives results even for $k = 2$, and outperforms the bound implied by \cite{jiang2020approximately} for $k \leq 1.75 \times 10^4$, despite only implying the existence of $O(\log k)$-stable committees. 

Nonetheless, we show that our techniques can be applied to the asymptotic setting as well, giving an improvement over \cite{jiang2020approximately}. 

\begin{theorem}\label{thm:stable}
In any election, there exists a $\frac{9.8217}{k}$-undominated committee of size $k$.
\end{theorem}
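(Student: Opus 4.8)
\medskip
\noindent\textbf{Proof plan.}
The plan is to reuse the probabilistic-method-and-minimax machinery behind \Cref{thm:main} (equivalently \Cref{thm:stronger}), but to optimize the free parameters of the construction for large $k$, so that the $\Theta(\log k)$ blow-up incurred by the cruder analysis of \Cref{thm:main} is replaced by the constant $9.8217$ in the limit (improving the $\tfrac{16}{k}$ bound obtainable from \cite{jiang2020approximately,DBLP:journals/teco/ChengJMW20}). As recorded in \Cref{sec:equiv}, a committee of size $k$ is $\tfrac{c}{k}$-undominated iff it is $c$-stable, so it suffices to construct, in every election, a distribution $\mathcal D$ over committees of size $k$ that is ``robustly good'': for every candidate $a$ and every set $T$ of at most $\tfrac{9.8217}{k}\,n$ voters, the probability over $S \sim \mathcal D$ that all of $T$ rank $a$ above every member of $S$ is small enough to force an individual committee in the support of $\mathcal D$ to be $\tfrac{9.8217}{k}$-undominated. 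Viewing this as a zero-sum game (the planner picks $S$; the adversary picks the pair $(a,T)$) and applying the minimax theorem reduces the task to: against every \emph{mixed} adversary strategy, exhibit a single committee that beats it.

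For the building block I would take the stable lottery of Cheng--Jiang--Munagala--Wang \cite{DBLP:journals/teco/ChengJMW20} --- a distribution $p$ over candidates with $\sum_v q_v(a) \le \tfrac n2$ for every candidate $a$, where $q_v(a) := \Pr_{b \sim p}[a \succ_v b]$ --- together with its iterated refinement: a candidate that a constant fraction of the voters rank above all of $\mathrm{supp}(p)$ cannot be absorbed by $p$ alone, so one repeatedly passes to the sub-election of the voters who are not yet adequately represented and takes a fresh stable lottery there, obtaining a sequence $p_1, p_2, \dots$ with progressively stronger domination guarantees. The distribution $\mathcal D$ then draws a random depth $j$ from a mixing distribution $\mu$ on $\{1,\dots,k\}$ and takes one independent sample from each of $p_1,\dots,p_j$, padding the result up to size $k$ (which only shrinks the set of voters that can dominate the committee). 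For fixed $a$, this makes $\Pr_{S\sim\mathcal D}[a \succ_v S]$ a fixed function $f_\mu$ of the quantiles of $a$ in the successive lotteries as seen by $v$; the (iterated) stability guarantees cap how many voters can have those quantiles near $1$, and summing over voters bounds, for every $a$, the probability that at least $\tfrac{9.8217}{k}n$ voters prefer $a$ to the whole committee.

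Two points carry essentially all the difficulty. First, extracting a single committee: a naive union bound over all $m$ candidates is hopeless, since the expected total number of (voter, preferred-deviation) incidences is of order $nm/(k+1)$; the fix --- and the reason the bound is stated against \emph{small} subsets of voters --- is that a set of $\lceil\tfrac{9.8217}{k}n\rceil$ voters who are \emph{all} dominated by the same candidate is a far rarer coincidence than a single incidence, so under the (iterated) stability guarantees the relevant probability for each candidate decays like a high power of a quantity bounded away from $1$, and these terms sum to less than $1$. Second, and this is where I expect the real work to be, pushing the constant down to $9.8217$: once a candidate's contribution is written through $f_\mu$, optimizing the guarantee becomes a one-dimensional variational problem over the mixing distribution $\mu$ --- roughly, minimizing $\sup_{x\in(0,1]} f_\mu(x)/\lambda(x)$ for the ``budget'' profile $\lambda$ that stability provides --- whose optimal value is $9.8217$; the extremal $\mu$ is a specific continuous-depth mixture, and evaluating the optimum is the same computation whose non-optimized version produces the $\tfrac{\alpha}{1-\ln\alpha}$ shape in \Cref{thm:main}. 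Assembling these, I would verify the robust-goodness inequality with $c = 9.8217$ for the optimized $\mathcal D$, conclude $\Pr_{S\sim\mathcal D}[S \text{ is not }\tfrac{9.8217}{k}\text{-undominated}] < 1$, and read off the desired committee from the support of $\mathcal D$.
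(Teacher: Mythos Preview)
Your proposal is in the right neighborhood (iterate on a shrinking sub-electorate, optimize free parameters), but two of the load-bearing steps do not work as written, and the overall shape differs from the paper's argument.

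First, the extraction of a single committee. You acknowledge that a naive union bound over the $m$ candidates is hopeless, and then claim the fix is that the per-candidate failure probability ``decays like a high power of a quantity bounded away from $1$, and these terms sum to less than $1$.'' But $m$ is arbitrary and nothing in your construction ties the per-candidate probability to $m$, so no such sum can be controlled. The minimax reduction you invoke in your first paragraph is the right move, but it does \emph{not} reduce the task to bounding $\Pr_{S\sim\mathcal D}[S\text{ not undominated}]$; it bounds a game value, which is what \Cref{lem:low} delivers, and one then extracts a committee via a rank-threshold / expectation argument, not a union bound. Second, your sequence $p_1,p_2,\dots$ is ill-defined: you say you ``pass to the sub-election of the voters who are not yet adequately represented,'' but which voters those are depends on the candidates already \emph{sampled} from $p_1$, so $p_2$ cannot be fixed in advance as a distribution from which you later draw independently. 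Adaptivity of this kind is essential, and it is precisely why the paper phrases the argument as an induction rather than as a one-shot random construction.

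For comparison, the paper proves \Cref{thm:stable} by strong induction on $k$ (with \Cref{thm:main} as the base case for $k\le 495$). In the inductive step it draws a committee $S$ of size $(1-\gamma)k$ from the distribution of \Cref{lem:low}, keeping the activation function $g$ as a free parameter; it then recurses with size $\gamma k$ on the set $W$ of the $\beta n$ voters with smallest $g(\rank_v(S))$. For voters outside $W$, the bound from \Cref{lem:low} caps how many can prefer any fixed $a$ to $S$. The constant $9.8217$ does come from a variational problem, but not over a depth-mixing distribution $\mu$: one optimizes $g$ (a shifted ReLU is optimal), then $\beta$ (optimal $\beta=\gamma^2$), and is left with $\min_{\gamma\in(0,1)} \tfrac{4\ln(1/\gamma)}{(1-\gamma)^2}\approx 9.82163$ at $\gamma\approx 0.28467$. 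Your one-sample-per-level scheme with a random depth is a different object, and you give no calculation linking it to this number.
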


As a corollary, \Cref{thm:stable} implies the existence of $9.8217$-stable committees for preferences induced by rankings over candidates. We note that \Cref{thm:stable} improves \Cref{thm:main} for $k \geq 496$.

\subsection{Technical Overview}

Our approach, building on \cite{jiang2020approximately}, is to first construct a particular distribution over committees of size $k$, and then to show that by sampling from this distribution, the resulting committee is $\alpha$-undominated in expectation. 
In fact, \cite{elkind2015condorcet}'s proof that the existence of $O(\log m)$ size Condorcet winning committees in elections with $m$ candidates can also be viewed through this framework. There, we can consider the \emph{uniform} distribution over candidates. To construct the committee, we sample from this distribution, remove the candidates that are beaten, and recurse on the remaining candidates. In expectation, half of the candidates are removed in each round, so the algorithm is likely to end with a committee of $O(\log m)$ candidates. The greedy algorithm of choosing the candidate that beats the most others in each round can be viewed as derandomization via conditional expectation.

In this light, a natural approach to improving the $O(\log m)$ guarantee is to find a better distribution over committees. One of the insights in \cite{jiang2020approximately} was to construct this distribution via the equilibrium of a zero-sum game. In the game, the \textit{defender} chooses a committee $S$ of size $k$, and the \textit{attacker} chooses a candidate $a$. After the choices are made, the defender pays the attacker a dollar for each voter that prefers $a$ over all members of $S$.
The optimal strategy for the defender is to choose a committee randomly according to some distribution, which \cite{jiang2020approximately} call the \emph{stable lottery}. Stable lotteries generalize the well-studied \textit{maximal lotteries} \cite{kreweras1965aggregation,fishburn1984probabilistic}, which correspond exactly to the case where $k = 1$. 

Next, to create a committee of size $k$, \cite{jiang2020approximately} take a recursive approach. First, they sample a committee $S$ of size~$k/2$, and show that ignoring the 25\% of voters that least like $S$, any candidate $a$ is preferred over $S$ by less than a $\frac{8}{k}$ fraction of the voters (which are treated as an irrevocable loss). In the next step, they recurse on the ignored voters, sample a committee of size $k/4$, and lose less than another $\frac{4}{k}$ fraction of the voters against any candidate $a$. The committee size and fraction of voters we lose continue to decrease exponentially, and in the end we have a committee of size $k$ such that less than a $\frac{16}{k}$ fraction of voters prefer any candidate $a$.

\vspace{10pt}

To prove \Cref{thm:main}, we introduce three twists into this framework. Two are part of how we set up the zero-sum game in order to construct a distribution over committees that individual candidates perform poorly against (\Cref{lem:low}), and one is in how we show that in expectation, a random committee sampled from the distribution performs well (\Cref{lem:high} and \Cref{claim:high}).

\paragraph{Improving the game by confining the adversary.} First, we modify the setup of the game so that the adversary must choose both a candidate $a$ \emph{and} a subset $U$ of an $\alpha$ fraction of the voters. The adversary then only gets paid for the voters in $U$ that prefer $a$ over the committee $S$. By tying the hands of the adversary in this way, we can drive down the value of the game, which gives a more favorable guarantee for the distribution over committees.

Once we fix the distribution over committees (referred to by $\Delta$), we measure the quality of a candidate $a$ or committee $S$ with respect to a voter $v$ with a crucial notion that we call the \emph{rank}, denoted $\rank_v(a)$ or $\rank_v(S)$ (see \Cref{def:rank}). Roughly speaking, this is simply the probability when we sample from $\Delta$ that we get a committee that is worse than $a$ or $S$ in $v$'s preference.

\paragraph{The activation function.} The second twist is what we call the \emph{activation function} $g$, which gives us flexibility in how we measure each voter's preferences for candidates and committees. This function may seem somewhat enigmatic in the proof, but here we try to give some rough intuition for the idea behind it. The initial observation is that by using versions of the zero-sum game with different committee sizes, we can construct distributions over committees of size $k$ in a variety of ways. The simplest would be to take the optimal mixed strategy for the defender in the original game with committee size $k$, but we could also take the optimal strategy for size $k/2$, sample twice from it and take the union. These different ways of constructing the distributions can actually be interpreted as attaching different activation functions to the defender's distribution in the payoffs of the \emph{original} game. For example, sampling twice from the $k/2$ distribution is equivalent to attaching the function $g(x) = \sqrt{x}$, and the reason corresponds to the fact that sampling two uniform reals from $[0,1]$ and taking the max is equivalent to sampling one uniform real from $[0, 1]$ and taking the square root.
In the end, thanks to the generality of the minimax theorem, the proof works for any non-constant, non-decreasing function $g \colon [0, 1]\to \mathbb{R}_{\geq0}$ such that $g(x^k)$ is convex. These functions give a richer continuous space of options for modifying the game, some of which are not easily interpretable in terms of the intuition described above. Each choice of $g$ gives a different bound for $\alpha$ as a function of $k$, and so we can simply choose the function that gives the best guarantee. To establish \Cref{thm:main}, it turns out that the simple choice $g(x) = x^{1/k}$ is enough, but we can get a slightly stronger result (\Cref{thm:stronger}) with the best function.

\paragraph{A one-shot approach with finer accounting of all voters.} Third, we use a more precise approach for showing that some committee in the support of our distribution performs well, by diligently accounting for the contributions of each voter. In each step of \cite{jiang2020approximately}'s recursion, when they sample committee $S$, they consider for each voter $v$ whether or not $\rank_v(S)$ is above some threshold (called $\beta$, which is set to $\frac14$). The voters below the threshold are ignored, and then recursed on in the next iteration. There are two potential roadblocks in using this approach for small committee sizes. Intuitively, if we are aiming for a final committee size of around 6, the recursion cannot be very deep. Each iteration can only choose 2 or 3 candidates, for which the guarantee is insufficient. That is, the benefits of the recursion only kick in for sufficiently large committees, and for small committees, it is better to sample the whole committee in one shot (without recursion). 
Second, there is too much loss in evaluating each voter with a binary threshold, and without recursion, we need better accounting for voters with a low opinion of the committee. In \Cref{lem:high} and \Cref{claim:high}, we give a smoother analysis, which allows us to more precisely account for the contributions of each voter. To give some rough intuition, what we would like to show is that there is some $S$ such that the total sum of $\rank_v(S)$ is large for any subset of an $\alpha$~fraction of the voters. If we fix $S$ and plot each $\rank_v(S)$, ordered from smallest to largest, it suffices to bound the area under the bottom $\alpha$~fraction. It turns out that the worst case for these ranks (that minimizes the area for all $S$) is not a step function with a sharp threshold, but rather a linear function with slope $1$ (akin to the cyclic preferences depicted in \Cref{tab:cyclic}).

\vspace{10pt}

Finally, to prove \Cref{thm:stable}, we use our modifications in tandem with the recursive approach. In the proof of this theorem, the idea above that does the heavy lifting is the use of the activation function $g$.

\subsection{Related Work}

\paragraph{Proportionality in committee selection.}
In the context of committee selection, the principle of proportionality says that large voter coalitions should have their preferences fairly represented --- an idea that dates back to at least the 19th century \cite{droop1881methods}. Since its advent, a substantial body of research has been dedicated to studying the possibility and implications of proportionality.
One of the most widely studied models is \emph{approval voting}, where voters express their preferences by selecting a subset of candidates they approve of. We refer the reader to a survey by Lackner and Skowron \cite{lackner2023multi} for a detailed discussion on the topic. A key appeal of this model is that there are a wide variety of proportionality axioms such as \emph{justified representation} (JR) \cite{aziz2017justified} and its variants (for example, \cite{DBLP:conf/aaai/FernandezELGABS17,DBLP:conf/sigecom/BrillP23}) that are satisfied by natural rules (such as Proportional Approval Voting (Thiele’s rule) \cite{thiele1895om,kilgour2010approval,aziz2017justified,DBLP:conf/sigecom/PetersS20}, Phragm\'{e}n's rule \cite{phragmen1894methode,DBLP:conf/sigecom/PetersS20}, and the Method of Equal Shares \cite{DBLP:conf/sigecom/PetersS20,DBLP:conf/nips/PierczynskiSP21}). These ideas have also been impactful in practice, with for example, the historical use of Thiele's
rule and Phragm\'{e}n's rule \cite{DBLP:journals/corr/Janson16}, and the recent successful implementation of the Method of Equal Shares for participatory budgeting in several European cities \cite{MES}. Additionally, this line of work is driven forward by intriguing conjectures that even stronger axioms, such as core stability \cite{aziz2017justified,DBLP:conf/sigecom/FainMN18}, might be universally satisfiable as well.

In comparison, proportionality in committee selection with \emph{ranked} preferences is relatively under-explored. As \cite{lackner2023multi} suggest, part of the challenge is that notions of proportionality in the approval setting do not always generalize to the ranking setting. (Or, like with core stability, the analogous axioms are not always satisfiable \cite{DBLP:journals/teco/ChengJMW20}.) One particularly well studied class of rank-based committee selection rules is that of \emph{committee scoring rules} \cite{elkind2017properties}. These voting rules, which generalize scoring rules in the single-winner setting, capture several natural committee selection rules, and have been axiomatically characterized \cite{faliszewski2019committee, skowron2019axiomatic}. %
We refer the reader to \cite{faliszewski2017multiwinner} for a more in-depth discussion.

\paragraph{Committee analogues of Condorcet winners.} Grappling with Condorcet's paradox has been a major driving force in social choice theory, and naturally, other attempts have been made to extend the notion of Condorcet winners to the multi-winner setting. Fishburn \cite{fishburn1981majority,fishburn1981analysis} introduced the idea of a \emph{majority committee}, defined as a committee preferred by a majority of voters over any other committee of the same size. The \emph{Smith set} \cite{good1971note,smith1973aggregation} $S$ is defined as the minimal committee such that for any $a \notin S$ and $b \in S$, a majority of voters prefers $b$ over $a$. \emph{Uncovered sets} \cite{fishburn1977condorcet,miller1980new}, \emph{bipartisan sets} \cite{laffond1993bipartisan} (the support of \emph{maximal lotteries} \cite{kreweras1965aggregation,fishburn1984probabilistic}), and other tournament solutions \cite{brandt2016tournament} can also be viewed as multi-winner generalizations of Condorcet winners. However, these notions face the same challenge as Condorcet winners: they either do not always exist or sometimes coincide with the entire (potentially large) set of candidates. As in the single-winner case, the goal shifts to identifying Condorcet-consistent rules, which select a Condorcet winner (or the analogous multi-winner notion) when one exists \cite{coelho2005understanding,barbera2008choose}.

In this context, \Cref{thm:main} highlights a distinct advantage of the approach by Elkind, Lang, and Saffidine \cite{elkind2015condorcet}: small Condorcet-undominated sets are guaranteed to exist.

\paragraph{Other explorations of \Cref{q:main}.} Lastly, we mention a few other interesting explorations of Condorcet winning sets, and more generally $\alpha$-undominated sets. \cite{geist2014finding} used SAT solving to determine the largest Condorcet dimension in elections with a small number of voters and candidates. Their search did not turn up any instances with dimension larger than 3, but they found an election with just 6 voters and candidates with dimension 3, and they showed that this is the smallest possible. (We include one such instance in \Cref{tab:dim3-6}.) \cite{Bloks18} also explored whether elections with Condorcet dimension 4 could be constructed by exploring \textit{dominating sets} in tournament graphs, but that approach did not yield any such elections.
On the positive side, \cite{DBLP:journals/corr/LassotaVvS24} very recently showed that in elections where the voters and candidates are embedded in a 2-dimensional space, and their preferences are defined by their distance according to the $\ell_1$ or $\ell_\infty$ norm, the Condorcet dimension is at most 3.

In a more informal setting, a question isomorphic to \Cref{q:main} has also been explored in a series of MathOverflow posts \cite{mathoverflow-domination,mathoverflow-graph-domination,mathoverflow-circular-domination}, motivated by its connection to the \textit{discrete Voronoi game} \cite{DBLP:journals/jgaa/GerbnerMPPR14}. These posts offer an intriguing window into different approaches to resolving the problem, including why some natural approaches fall short. In their formulation \cite{mathoverflow-domination}, each candidate $a$ is represented by a function $f_a \colon [n] \to \mathbb{N}$, which can be thought of as a map from each voter $v$ to the rank of $a$ in $v$'s preference order. 
They ask \Cref{q:main}, with a particular focus on the case where $k = 2$. 
The responses contain examples of elections with Condorcet dimension 3, including the general lower bound that $\alpha$-undominated committees of size $k$ do not always exist when $\alpha < \frac{2}{k + 1}$~\cite{mathoverflow-lb}.

One natural approach towards positive results, suggested by Speyer \cite{mathoverflow-graph-domination}, is to solve the following graph theory question. 
\begin{question}\label{q:graph}
 For what choices of $\ell, k \in \mathbb{Z}^+$ does there exist a directed graph with girth larger than $\ell$ such that every subset of $k$ vertices has a common in-neighbor? 
\end{question}
 If there does \textit{not} exist such a graph for some choice of $\ell$ and $k$, then this implies that every election has a $(1 - \frac1\ell)$-undominated set of size $k$, by considering the graph on candidates where there is an edge $(a, b)$ whenever more than $1 - \frac1\ell$ fraction of the voters prefer $a$ over $b$. In particular, if every triangle-free graph has a pair of vertices without a common in-neighbor ($\ell = 3$ and $k = 2$), then this would imply that $\frac23$-undominated sets of size $2$ always exist, which would resolve \Cref{q:main} for $k = 2$. Unfortunately, such graphs \textit{do exist}. \cite{anbalagan2015large} gave a positive answer to \Cref{q:graph} for every  $\ell, k \in \mathbb{Z}^+$, using a construction based on additive combinatorics.

\section{Preliminaries and Notation}\label{sec:prelims}

\paragraph{Elections and preferences.} An \textit{election} (or \textit{preference profile}) is defined by the tuple $\mcal{E} = (V, C, \cg_V)$. $V$ denotes a set of $n$ voters, $C$ denotes a set of $m$ candidates, and $\cg_V$ denotes a set of linear orders (one per voter), representing the preferences of the voters over the candidates. We say that $a \cg_v b$ if $v$ prefers $a$ over $b$ (and $a \cgeq_v b$ if $a = b$ or $v$ prefers $a$ over $b$). We will use $a$ and $b$ to denote candidates, and in the body of the proof, we will use $a$ when the candidate is meant to be thought of as adversarially chosen and $\Delta_a$ to denote an adversarial \textit{distribution} over candidates.

We will use $S$ to denote a committee of at most $k$ candidates. When a voter $v$ prefers candidate~$a$ over each candidate in $S$, we write $a \cg_v S$. Here, $\tfrac1n |a \cg S|:= \frac1n\big|\{v\in V: a\cg_v S\}\big|$ denotes the fraction of voters that prefers $a$ over $S$. We note that if $a \in S$, then $\tfrac1n |a \cg S| = 0$.

With this notation, the central concepts of the paper can be defined as follows.

\begin{definition}[$\alpha$-undominated and $\alpha$-dominated]
A committee $S$ is \emph{$\alpha$-undominated} if for all $a \in C$, $\tfrac1n |a \cg S| < \alpha$. Conversely, $S$ is \emph{$\alpha$-dominated} by a candidate $a$ if $\tfrac1n |a \cg S| \geq \alpha$.
\end{definition}

We note that \cite{elkind2015condorcet} define equivalent notions in an opposite way. They define a \textit{$\theta$-winning} set, which is precisely a $(1 - \theta)$-undominated committee. The definition above is more in line with the approximate core stability definition, and we find it easier to think in terms of sets of voters that prefer $a$ over $S$, rather than sets of voters that prefer $S$ over $a$.

A \textit{Condorcet winning set} is a committee that is $\frac12$-undominated. The \textit{Condorcet dimension} of a fixed election is the size of its smallest Condorcet winning set.

For the purpose of proving \Cref{thm:main} and \Cref{thm:stable}, we assume that $\alpha n$ is an integer. To see why we can make this assumption, let us suppose we only prove these theorems when $\alpha n$ is an integer. The conditions of both of these theorems are of the form $\alpha \geq f(k)$. If $\alpha \geq f(k)$, then $\frac{\ceil{\alpha n}}{n} \geq f(k)$, and so there always exist $\frac{\ceil{\alpha n}}{n}$-undominated committees of size $k$. But a committee is $\frac{\ceil{\alpha n}}{n}$-undominated if and only if it is $\alpha$-undominated, so the results hold for any $\alpha \geq f(k)$.\footnote{Another observation that is helpful for intuition is that we can always create an arbitrarily large number of clones for each voter and candidate without changing the salient aspects of the problem.} 

As alluded to in the earlier discussion on stable committees, we will also impose a preference structure on committees induced by the voters' preferences over candidates. In particular, a voter $v$ \textit{strongly prefers} $S'$ over $S$ (denoted $S'\cg_v S$) if and only if $v$ prefers her favorite candidate in $S'$ over her favorite in $S$. If $v$ has the same favorite in $S$ and $S'$, we say that $v$ \textit{weakly prefers} $S'$ over $S$ (denoted $S'\cgeq_v S$) if $S \subset S'$ or $v$ prefers her favorite candidate in $S'\setminus S$ over her favorite in $S\setminus S'$.
It is not hard to check that these preferences define a complete and transitive order on the committees. Note that this distinction between strong and weak preferences is primarily important for approximately stable committees (\Cref{def:approx-stable}) in \Cref{sec:equiv}. Weak preferences are also used in \Cref{def:rank}, but any extension of strong preferences to a total order over committees with $\cgeq_v$ treated as ($\cg_v$ or $=$) would also suffice.

\paragraph{Distributions and the minimax theorem.} Throughout the paper, we use $\Delta$ to denote a distribution over committees of size at most $k$, and $\Delta_a$ to denote an adversarial distribution over candidates.

A central tool in our work is von Neumann's minimax theorem, stated below. 

\begin{theorem}[Minimax Theorem \cite{v1928theorie}]\label{thm:minimax}
Let $X \subset \mathbb{R}^p$ and $Y \subset \mathbb{R}^q$ be compact convex sets. If $f\colon X\times Y \to \mathbb{R}$ is a function such that $f(\bm{x}, \cdot)$ is convex for each fixed $\bm{x}\in X$ and  $f(\cdot, \bm{y})$ is concave for each fixed $\bm{y} \in Y$, then
$$\max_{\bm{x} \in X}\min_{\bm{y} \in Y} f(\bm{x}, \bm{y}) = \min_{\bm{y} \in Y}\max_{\bm{x} \in X} f(\bm{x}, \bm{y}).$$
\end{theorem}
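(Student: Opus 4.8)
The plan is to establish the two inequalities separately. The direction $\max_{\bm{x}}\min_{\bm{y}} f(\bm{x},\bm{y}) \le \min_{\bm{y}}\max_{\bm{x}} f(\bm{x}, \bm{y})$ is elementary, and the reverse direction will follow once we exhibit a saddle point, which I would obtain from Kakutani's fixed-point theorem. First I would record that all the maxima and minima below are attained: $f$ is separately concave and convex on the compact convex finite-dimensional sets $X$ and $Y$, hence continuous (in every application of this theorem in the paper $f$ is in fact bilinear, so continuity is immediate), and we are optimizing continuous functions over compact sets. For the easy inequality, fix any $\bm{x}_0\in X$ and $\bm{y}_0\in Y$; then $\min_{\bm{y}\in Y} f(\bm{x}_0,\bm{y})\le f(\bm{x}_0,\bm{y}_0)\le \max_{\bm{x}\in X} f(\bm{x},\bm{y}_0)$. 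The left side depends only on $\bm{x}_0$ and the right side only on $\bm{y}_0$, so maximizing over $\bm{x}_0$ and minimizing over $\bm{y}_0$ yields $\max_{\bm{x}}\min_{\bm{y}} f \le \min_{\bm{y}}\max_{\bm{x}} f$.

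For the reverse inequality it suffices to find a \emph{saddle point}: a pair $(\bm{x}^\star,\bm{y}^\star)\in X\times Y$ with $f(\bm{x},\bm{y}^\star)\le f(\bm{x}^\star,\bm{y}^\star)\le f(\bm{x}^\star,\bm{y})$ for all $\bm{x}\in X$ and $\bm{y}\in Y$. Indeed, the first inequality gives $\min_{\bm{y}}\max_{\bm{x}} f \le \max_{\bm{x}} f(\bm{x},\bm{y}^\star) = f(\bm{x}^\star,\bm{y}^\star)$, and the second gives $f(\bm{x}^\star,\bm{y}^\star) = \min_{\bm{y}} f(\bm{x}^\star,\bm{y}) \le \max_{\bm{x}}\min_{\bm{y}} f$, so, combined with the easy inequality, all four quantities coincide. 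To produce the saddle point I would define the best-response correspondences $\Phi(\bm{y}) := \arg\max_{\bm{x}\in X} f(\bm{x},\bm{y})$ and $\Psi(\bm{x}) := \arg\min_{\bm{y}\in Y} f(\bm{x},\bm{y})$, and set $T(\bm{x},\bm{y}) := \Phi(\bm{y})\times\Psi(\bm{x})$, a correspondence from the nonempty compact convex set $X\times Y\subset\mathbb{R}^{p+q}$ to itself. Each $\Phi(\bm{y})$ is nonempty and compact (by compactness of $X$ and continuity of $f$) and convex (the set of maximizers of the concave function $f(\cdot,\bm{y})$ over the convex set $X$ is convex), and dually for $\Psi(\bm{x})$ using convexity of $f(\bm{x},\cdot)$; moreover $T$ is upper hemicontinuous with closed graph, by continuity of $f$ and compactness. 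Kakutani's fixed-point theorem then produces $(\bm{x}^\star,\bm{y}^\star)$ with $(\bm{x}^\star,\bm{y}^\star)\in T(\bm{x}^\star,\bm{y}^\star)$, i.e.\ $\bm{x}^\star\in\Phi(\bm{y}^\star)$ and $\bm{y}^\star\in\Psi(\bm{x}^\star)$, which is exactly the pair of saddle-point inequalities above.

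I expect the main work to lie in verifying the hypotheses of Kakutani's theorem for $T$: that separate concavity and convexity make the argmax and argmin sets convex, and that continuity of $f$ together with compactness of $X$ and $Y$ make $T$ upper hemicontinuous with closed graph. One subtlety worth flagging is that separate concavity/convexity alone only guarantee continuity of $f$ on the relative interiors of $X$ and $Y$; since $f$ is continuous (indeed bilinear) in every use of this theorem in the paper this is a non-issue, but in full generality one should either assume continuity outright or invoke a version of Kakutani's theorem tolerating the relevant semicontinuity. Finally, if one prefers to avoid an external fixed-point theorem, two alternatives are available: (i) replace $X$ and $Y$ by finite $\varepsilon$-nets, apply linear-programming duality (equivalently, a separating-hyperplane/Farkas argument) to the resulting finite bilinear game on the two probability simplices, and let $\varepsilon\to 0$ using uniform continuity of $f$ on the compact product $X\times Y$; or (ii) have the two players repeatedly play against one another using a no-regret online convex optimization algorithm such as online gradient descent, so that the sublinear regret bound forces the time-averaged play to be an approximate saddle point, from which an exact one is extracted by compactness.
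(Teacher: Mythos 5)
The paper does not prove this statement at all: it is quoted as a classical result (von Neumann, 1928; the convex--concave generalization you need here is the Fan/Sion-type extension) and used as a black box, so there is no internal proof to compare against. Your argument is a correct and standard way to supply one: the weak inequality $\max_{\bm{x}}\min_{\bm{y}} f \le \min_{\bm{y}}\max_{\bm{x}} f$ is immediate, and the reverse follows from a saddle point obtained by applying Kakutani's fixed-point theorem to the product of the best-response correspondences, whose nonempty compact convex values come from compactness plus concavity in $\bm{x}$ and convexity in $\bm{y}$, and whose closed graph comes from Berge's maximum theorem (this needs \emph{joint} continuity of $f$). Your alternative routes (finite $\varepsilon$-nets plus LP duality, or no-regret dynamics) are also standard and would work. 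Two remarks. First, your parenthetical claim that ``in every application of this theorem in the paper $f$ is in fact bilinear'' is wrong and contradicts the paper's own emphasis: in the proof of the lemma on candidates being low, $f$ is linear in the maximizer's variables $x_{a,U}$ but deliberately \emph{non-linear} (only convex) in $\bm{y}$ because of the activation function $g$ --- this is exactly why the paper invokes the general convex--concave version rather than LP duality. The continuity you need should therefore be argued from joint continuity of the specific $f$ (a finite sum of compositions of continuous maps, granting continuity of $g$), not from bilinearity. Second, your caveat about the hypotheses is well taken: as stated, with only separate convexity/concavity and no (semi)continuity assumption, the maxima and minima need not be attained (convex functions on compact convex sets can be discontinuous at the boundary), so a fully rigorous version should either assume continuity or replace $\max/\min$ by $\sup/\inf$ with the appropriate semicontinuity hypotheses; you correctly flag this, and it is harmless for the paper's use.
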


A common way of applying the minimax theorem is in the context of two-player zero-sum games. In this setting, $X$ and $Y$ are the sets of probability distributions over actions that the two players can take (called \textit{mixed strategies}). For streamlined notation, we omit these domains when they are clear from context. For example, we may write $\displaystyle\min_{\Delta}$ to mean the minimum over all distributions over committees of at most $k$ candidates, and $\displaystyle\max_a$ to denote the maximum over candidates $a$.

We note that in most game-theoretic applications of the minimax theorem, the function $f$ is linear in all of its variables. (With a payoff matrix $A \in \mathbb{R}^{p\times q}$, $f(\bm{x}, \bm{y}) = \bm{x}^\top A \bm{y}$.) In our usage, we need the generality of the theorem, due to the non-linearity of the activation function $g$. %

\section{Proof of \texorpdfstring{\Cref{thm:main}}{Theorem 1}}\label{sec:mainthm}
We start by giving an overview of the structure of the proof and central concepts that are used throughout. Our goal is to prove the following theorem.

\begin{theorem}\label{thm:meat}
Let  $g\colon [0, 1] \to \mathbb{R}_{\geq0}$ be a non-constant, non-decreasing function such that $g(x^k)$ is convex. If $\alpha \in (0, 1]$ 
and $k \geq 1$ satisfy the condition  
$$\int_0^\alpha g(x)  \dif x \geq \int_{1 - \alpha}^1 g(x^k) \dif x$$
then in any profile, there exists a committee of $k$ candidates that is $\alpha$-undominated.
\end{theorem}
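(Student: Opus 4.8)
The plan is to follow the probabilistic/minimax strategy sketched in the technical overview. I would set up a zero-sum game where the \emph{defender} picks a distribution $\Delta$ over size-$k$ committees and the \emph{attacker} picks both a candidate $a$ and a subset $U \subseteq V$ with $|U| = \alpha n$; the payoff to the attacker is $\frac1n|\{v \in U : a \cg_v S\}|$ where $S \sim \Delta$. By the minimax theorem (using the generalized form stated, since the activation function $g$ makes the payoff nonlinear) it suffices to exhibit, for \emph{every} adversarial distribution $\Delta_a$ over candidates, a single committee $S$ of size $k$ whose payoff against any choice of $U$ is small — specifically, small enough that the value of the game is below $\alpha$, which forces some committee in the support of the optimal $\Delta$ to be $\alpha$-undominated.

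The main work splits into two lemmas matching \Cref{lem:low} and \Cref{lem:high}/\Cref{claim:high}. For the ``low'' direction: given any adversarial $\Delta_a$, I would construct $\Delta$ by taking a threshold/quantile of $\Delta_a$ with respect to the \emph{rank} function — roughly, define $\rank_v(a)$ as the probability a fresh sample from $\Delta$ is weakly worse than $a$ for $v$ (per \Cref{def:rank}), and choose $\Delta$ so that $\Pr_{S\sim\Delta}[a \cg_v S]$ is controlled by $g$ evaluated at appropriate quantiles. The activation function enters here: instead of sampling a committee directly, one samples $k$ candidates i.i.d.\ from a base distribution, and the identity that the max of $k$ i.i.d.\ uniforms on $[0,1]$ has CDF $x^k$ is what turns the condition ``$g(x^k)$ convex'' into the usable inequality. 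For the ``high'' direction, I would fix the $\Delta$ just built and argue that for \emph{some} $S$ in its support, for every $U$ of size $\alpha n$, the sum $\sum_{v \in U}\rank_v(S)$ is large; the key observation (from the overview) is that the worst-case profile of sorted ranks is the linear function $x \mapsto x$ rather than a step function, so the relevant comparison is between $\int_0^\alpha g(x)\dif x$ (what a good committee collects on its worst $\alpha$-fraction of voters) and $\int_{1-\alpha}^1 g(x^k)\dif x$ (the bound on what any single candidate can collect), which is exactly the hypothesis.

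Concretely the steps are: (1) formalize the rank function and verify basic monotonicity/measure-theoretic properties; (2) state and prove \Cref{lem:low}, that against any $\Delta_a$ there is a $\Delta$ with $\Pr[a \cg_v S]$ bounded in terms of $g$ of the rank quantiles — this uses the $g(x^k)$-convexity via Jensen and the uniform-max identity; (3) state and prove \Cref{lem:high}/\Cref{claim:high}, the smoothed counting argument identifying the linear profile as extremal, e.g.\ by an exchange/rearrangement argument on the sorted ranks; (4) combine via \Cref{thm:minimax}: the hypothesis $\int_0^\alpha g(x)\dif x \ge \int_{1-\alpha}^1 g(x^k)\dif x$ makes the defender's guarantee beat the attacker's best response, so the game value is small enough that some $S$ in the support of the optimal defender strategy must be $\alpha$-undominated; (5) handle the integrality of $\alpha n$ (already reduced to in \Cref{sec:prelims}) and the degenerate cases ($a \in S$, etc.).

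The hard part will be step (3): pinning down that the extremal distribution of ranks over voters is the linear one, and getting the accounting tight enough that the clean integral inequality in the hypothesis is exactly what pops out — a cruder bound (like the binary threshold of \cite{jiang2020approximately}) would lose constant factors and would not suffice for $k$ as small as $6$. A secondary subtlety is making the minimax setup rigorous despite the nonlinearity: one must check that the attacker's strategy space (distributions over $(a,U)$ pairs) is compact convex and that the payoff, after substituting the $g$-activated defender distribution, is concave in the attacker's mixed strategy and convex in the defender's — this is where the convexity hypothesis on $g(x^k)$ is doing double duty.
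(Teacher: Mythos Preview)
Your high-level pieces are right---two lemmas comparing $g$-weighted rank sums over $\alpha n$-subsets, with the integral hypothesis linking them---but the way you wire them together has a genuine gap. The step-(4) implication ``game value $<\alpha$, so some $S$ in the support of the optimal $\Delta$ is $\alpha$-undominated'' is false. With the raw payoff $\frac{1}{n}|\{v\in U:a\succ_v S\}|$ you set up in the first paragraph, a value below $\alpha$ only gives $\max_{a}\Ev_{S\sim\Delta}\bigl[\frac{1}{n}|a\succ S|\bigr]<\alpha$, an expectation bound that does not extract a single $S$ working against all $(a,U)$ simultaneously. In the paper, minimax is used \emph{only inside} the proof of the ``low'' lemma, and the game there does not have the raw-count payoff: the defender's variable is a distribution $\bm y$ over \emph{single candidates} (the committee distribution is then $\bm y^k$), the attacker's variable is a distribution $\bm x$ over pairs $(a,U)$, and the payoff is $\frac{1}{n}\sum_{v}\sum_{a,U\ni v} x_{a,U}\,g\bigl((\sum_{b:a\succ_v b} y_b)^k\bigr)$. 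Convexity of $g(x^k)$ is needed precisely to apply \Cref{thm:minimax} to \emph{this} function in $\bm y$. After the swap, the defender's response is simply $y_b=\sum_U x_{b,U}$ (the attacker's candidate-marginal), not a ``threshold/quantile''; a Riemann-sum bound plus Jensen on the concave map $t\mapsto\int_{1-t}^1 g(x^k)\,dx$ then yields $\frac{1}{n}\sum_{v\in U}g(\rank_v(a))<\int_{1-\alpha}^1 g(x^k)\,dx$.

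The ``high'' lemma is proved by \emph{averaging} $S\sim\Delta$ (no minimax): one shows $\Ev_{S\sim\Delta}\bigl[\frac{1}{n}\sum_{v\in U_S}g(\rank_v(S))\bigr]>\int_0^\alpha g(x)\,dx$, where $U_S$ is the worst $\alpha n$ voters for $S$, again via a per-voter Riemann-sum lower bound and Jensen on the convex map $t\mapsto\int_0^t g(x)\,dx$. (Your intuition that the extremal rank profile is linear is exactly what this Riemann-sum argument makes rigorous.) The final combination is then direct and does not invoke minimax: if $a$ $\alpha$-dominated the $S$ produced by the high lemma on a witness set $U$, then $\rank_v(a)\ge\rank_v(S)$ for all $v\in U$, so $\frac{1}{n}\sum_{v\in U}g(\rank_v(a))\ge\frac{1}{n}\sum_{v\in U}g(\rank_v(S))$, which together with the two lemma bounds contradicts the hypothesis $\int_0^\alpha g(x)\,dx\ge\int_{1-\alpha}^1 g(x^k)\,dx$. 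So: move the minimax entirely into step (2), change its payoff to the $g$-weighted one above, and make step (4) the two-line rank-comparison argument.
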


At a very high level, the two sides of the final condition are some measurement of the desirability of a committee and a candidate over subsets of an $\alpha$ fraction of the voters. The left side measures how much our committee $S$ is liked by the $\alpha$ fraction of voters that \textit{least} like it, whereas the right side measures how much any adversarial candidate is liked by the $\alpha$ fraction of voters that \textit{most} like it. If the left side is bigger than the right, then it is not possible for an $\alpha$ fraction of voters to prefer any $a$ over $S$.

Varying the function $g$ allows us to change the way that we measure the voters' preferences, for example, by prioritizing higher ranks and ignoring lower ranks. We write $g$ in an abstract way to show the flexibility of the proof techniques, but for the best intuition, we encourage the reader to think of $g$ as the identity function.

Before digging into the proof, first let us see why \Cref{thm:main} follows directly from \Cref{thm:meat}.

\begin{proof}[Proof of \Cref{thm:main} assuming \Cref{thm:meat}]
It suffices to show that the pre-condition of \Cref{thm:main} implies the pre-condition of \Cref{thm:meat}, since the conclusions of both theorems are the same. In particular, we will show that if $\frac{\alpha}{1 - \ln\alpha} \geq \frac{2}{k + 1}$, then $\int_0^\alpha g(x)  \dif x \geq \int_{1 - \alpha}^1 g(x^k) \dif x$ where $g(x) = x^{1/k}$. 

Suppose that $\frac{\alpha}{1 - \ln\alpha} \geq \frac{2}{k + 1}$. First, by rearranging, it follows that 
$$\frac{k}{k + 1} \left(1 + \frac{\ln\alpha}{k}\right) \geq 1 - \frac\alpha2.$$
Then using the fact that $\alpha^{\frac1k} = \exp\big(\frac1k\ln\alpha\big) \geq 1 + \frac{\ln\alpha}{k}$, we have
$$\frac{k}{k + 1} \alpha^{\frac1k} \geq 1 - \frac\alpha2.$$
Multiplying by $\alpha$, we have
$$\frac{k}{k + 1} \alpha^{1 + \frac1k} \geq  \frac12\alpha(2 - \alpha) = \frac12\big(1 - (1 - \alpha)^2\big).$$
The left and right sides are precisely $\int_0^\alpha x^{1/k}  \dif x$ and $\int_{1 - \alpha}^1 x \dif x$ respectively.
\end{proof}

Each possible choice of $g$ gives some tradeoff between $\alpha$ and $k$. Though $g(x) = x^{1/k}$ gives nearly the best result, ultimately it is not the optimal choice. In \Cref{ssec:opt-g}, we explore the small improvement that can be obtained with the optimal $g$.

\vspace{10pt}

 The techniques we use to prove \Cref{thm:meat} build on those developed by \cite{jiang2020approximately}. We proceed similarly via the probabilistic method. The first step of the proof  is to carefully design a distribution~$\Delta$ over committees of size at most $k$ which serves as a ``reference'' for measuring the quality of a candidate or a committee. The key property of $\Delta$ is that every candidate is viewed as ``low quality'' in comparison to committees drawn from $\Delta$ (\Cref{lem:low}, proved in \Cref{ssec:low}). The second step is to show that there exists a committee that is viewed as ``high quality'', and this is done by sampling from $\Delta$, and considering the quality of the committee in expectation (\Cref{lem:high}, proved in \Cref{ssec:high}). If a high-quality committee $S$ is $\alpha$-dominated by a candidate $a$, then $a$ must also be high-quality, which is contradiction. Our upper bound on the quality of candidates is precisely the right side of \Cref{thm:meat}, and the lower bound on the quality of some committee is the left side. 

We measure the quality of a candidate or committee on a voter-by-voter basis, using the following definition.

\begin{definition}\label{def:rank}
Given a distribution $\Delta$ over committees, we define the \emph{rank} of a candidate $a$ with respect to voter $v$ as 
$$\rank_v(a) := \Prx_{S'\sim \Delta}[a \cg_v S'].$$ 
Similarly, the rank of a committee $S$ with respect to voter $v$ is 
$$\rank_v(S) := \Prx_{S'\sim \Delta}[S \cgeq_v S'].$$ %
\end{definition}

The critical property of the rank is that if $a \cg_v S$, then $\rank_v(a) \geq \rank_v(S)$. We can visualize the ranks in the following way (see \Cref{fig:stacks}). Each voter has a column of height 1, divided into blocks. Each block corresponds to a committee $S$, and its height is the probability mass of $S$ in $\Delta$, denoted $\Prx_\Delta(S)$ (committees with probability 0 can be imagined as blocks with no height). The blocks are arranged in order of $v$'s preference, with the least preferred committees at the bottom and the most preferred at the top. 

\begin{figure}[ht!] %
  \centering
  \includegraphics[scale=0.899]{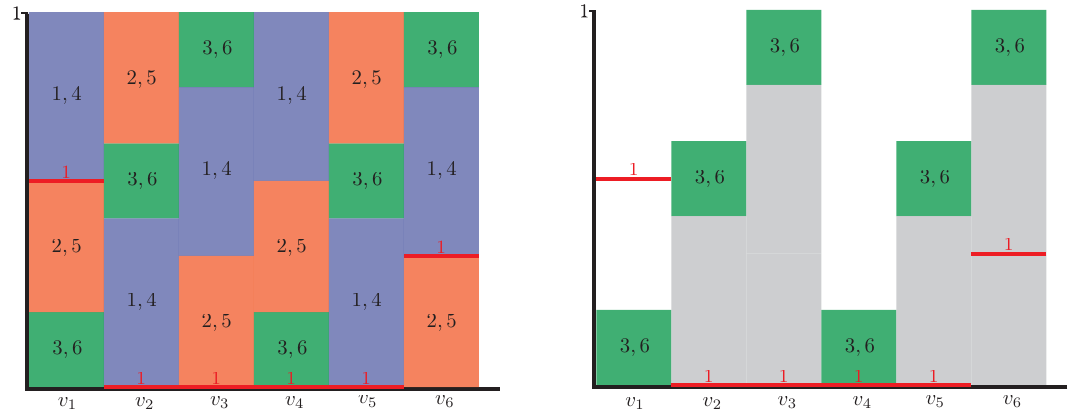} 
  \caption{A visualization of the ranks, in the election from \Cref{tab:cyclic}, and $\Delta$ which chooses the committees $\{1, 4\}, \{2, 5\}, \{3, 6\}$ with probabilities 0.45, 0.35, 0.2 respectively. The ranks of candidate $1$ are shown in red lines. One way to see that at most 2 voters prefer $1$ over $\{3, 6\}$ is that the three lowest ranks for $\{3, 6\}$ have a higher total than the three highest ranks for $1$. ($1.05$ from $v_1, v_2, v_4$ vs.\@ $0.9$ from $v_1, v_6,$ and any third.) %
  }
  \label{fig:stacks}
\end{figure}

The rank of a candidate is precisely the height of the bottom\footnote{The block for a candidate $a$ is the same as the block for the singleton committee $\{a\}$. These blocks typically have height 0, as depicted in \Cref{fig:stacks}, but ``bottom'' here is to be precise for any cases where $\Delta$ assigns positive mass to these committees.} of its block, and the rank of a committee is the height of the top of its block. One useful perspective that this visualization gives us is that from each voter $v$'s point of view, sampling from $\Delta$ is equivalent to sampling a real number $r$ uniformly at random from the interval $[0, 1]$, observing which of $v$'s blocks contains the point at height $r$, and selecting the corresponding committee (i.e., $S$ such that $\rank_v(S)$ is minimal, but at least $r$). %

With \Cref{def:rank} in hand, our notion of the ``quality'' of a candidate or committee can be thought of as the total \emph{area} under it, considered over subsets of $\alpha n$ voters. Our eventual goal is to prove the following two lemmas, which show that there is a distribution $\Delta$ such that all candidates occupy low ranks on average, and there exists a committee which occupies high ranks on average, over subsets of $\alpha n$ voters. In both lemmas, $g$ is an arbitrary non-constant, non-decreasing function such that $g(x^k)$ convex (in line with \Cref{thm:meat}).

\begin{lemma}[Candidates are low]\label{lem:low}
There exists a distribution $\Delta$ over committees of size at most $k$ such that for all candidates $a$, and all subsets of voters $U \subseteq V$ such that $|U| \leq \alpha n$, 
$$ \frac1n\sum_{v \in U} g(\rank_v(a)) < \int_{1 - \alpha}^1g(x^k)  \dif x.$$ 
\end{lemma}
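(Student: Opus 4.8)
\medskip

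The plan is to construct $\Delta$ as the optimal (maximizing) mixed strategy for the ``defender'' in a zero-sum game set up so that its value is controlled by $\int_{1-\alpha}^1 g(x^k)\dif x$, and then to read off the conclusion from the minimax theorem. Concretely, I would let the defender pick a committee $S$ of size at most $k$ (equivalently, a distribution $\Delta$ over such committees) and let the attacker pick \emph{both} a candidate $a$ \emph{and} a subset $U \subseteq V$ with $|U| \le \alpha n$ — this is the ``confining the adversary'' twist. The payoff to the attacker should be $\frac1n\sum_{v\in U} g\big(\Prx_{S'\sim\Delta}[a \cg_v S']\big)$, i.e. involving the activation function $g$ applied to the rank — this is the ``activation function'' twist. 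The catch is that $\rank_v(a)$ depends on $\Delta$, not just on a single sampled $S$, so the payoff is not bilinear; to fit the minimax framework I would write the attacker's pure-strategy payoff against a \emph{distribution} $\Delta$ and verify the hypotheses of Theorem~\ref{thm:minimax}: linearity (hence concavity) in the attacker's mixed strategy is immediate, and convexity in the defender's strategy $\Delta$ is exactly where the assumption that $g(x^k)$ is convex gets used — I would need $x \mapsto g(\Pr[\cdot])$ to behave convexly in the mixture, which is the delicate point (see below).

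\medskip

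Granting the minimax exchange, the value of the game equals $\max_\Delta \min_{(a,U)} \cdots = \min_{(a,U)} \max_\Delta \cdots$; but it is cleaner to bound it from the other side: I would exhibit the concrete defender strategy of sampling $k$ candidates i.i.d. from an optimal \emph{maximal-lottery} distribution $p$ over single candidates (the $k=1$ stable lottery), and show that against any fixed $(a,U)$ the attacker earns at most $\int_{1-\alpha}^1 g(x^k)\dif x$. The key computation: if $p$ is a maximal lottery, then for every candidate $a$ and every voter $v$, the $p$-probability that a single random candidate is $\prec_v a$ is at most... well, more carefully, what I want is that the number of voters $v$ with $\Prx_{c\sim p}[c \cl_v a] \ge t$ is at most $(1-t)n$ for every $t$ — a ``no candidate beats the lottery by much'' statement coming from the defining property of maximal lotteries. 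Sampling $k$ times independently turns a per-sample beating probability of $q_v := \Prx_{c\sim p}[c\cl_v a]$ into $\rank_v(a) = q_v^k$ (the committee of $k$ i.i.d. draws is $\prec_v a$ iff all $k$ draws are), and the ordered values of $q_v^k$ across the worst $\alpha n$ voters are dominated, via the layer-cake / integral bound, by the function $x \mapsto (\text{something})$ on $[1-\alpha,1]$, whose $g$-integral is exactly $\int_{1-\alpha}^1 g(x^k)\dif x$ after the substitution $x \leftrightarrow 1-t$. Monotonicity of $g$ is what lets the rearrangement/pointwise-domination argument go through, and the strict inequality in the lemma comes from the fact that the extremal configuration is never exactly attained (or by the integer/clone argument noted in the preliminaries for handling $\alpha n \in \mathbb{Z}$).

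\medskip

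The main obstacle I anticipate is the non-bilinearity of the game caused by $g$: one must phrase the defender's action space as the full simplex of distributions $\Delta$ (a compact convex set) rather than as pure committees, and then check that the attacker's payoff $\Delta \mapsto \frac1n\sum_{v\in U} g\big(\Prx_{S'\sim\Delta}[a\cg_v S']\big)$ is convex in $\Delta$. Since $\Delta \mapsto \Prx_{S'\sim\Delta}[a\cg_v S']$ is affine in $\Delta$, convexity of each summand would follow from convexity of $g$ itself — but $g$ is only assumed to satisfy that $g(x^k)$ is convex, which is weaker. So the correct move is probably to re-parametrize: have the defender choose a distribution over \emph{single candidates} (or over committees arising as $k$-fold products), so that the relevant composition becomes $t \mapsto g(t^k)$, which \emph{is} convex by hypothesis. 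Getting this parametrization exactly right — so that it is both rich enough for the minimax value to be small and structured enough for the convexity hypothesis to bite — is the crux; the rest is the layer-cake estimate plus bookkeeping with $g$'s monotonicity.
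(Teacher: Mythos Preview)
Your setup is right: parametrize the defender by a distribution $\bm y$ over \emph{single} candidates, take $\Delta=\bm y^k$, and let the attacker pick $(a,U)$; then $\rank_v(a)=\big(\sum_{b:\,a\cg_v b} y_b\big)^k$, and convexity in $\bm y$ is exactly the hypothesis that $g(x^k)$ is convex. That part matches the paper.

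The gap is your ``concrete defender strategy.'' You propose to bypass the minimax swap and simply take $\bm y=p$ to be a maximal lottery, relying on the claim that for every candidate $a$ and every threshold $t$,
\[
\big|\{v:\ \Pr_{c\sim p}[c\cl_v a]\ge t\}\big|\ \le\ (1-t)\,n.
\]
This distributional bound is strictly stronger than anything a maximal lottery guarantees (it only bounds the \emph{average} of $q_v=\Pr_{c\sim p}[c\cl_v a]$ over voters), and it is false. For a quick counterexample, take three voters with rankings $a\!>\!b\!>\!c$, $a\!>\!c\!>\!b$, $b\!>\!c\!>\!a$; the Condorcet winner is $a$, so the maximal lottery is the point mass on $a$. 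For the candidate $b$ we get $q_{v_1}=q_{v_2}=0$ and $q_{v_3}=1$, so at $t=0.9$ there is one voter with $q_v\ge t$ while $(1-t)n=0.3$. Thus the layer-cake domination you need does not come for free from the maximal lottery, and the rest of your estimate collapses.

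What the paper actually does is carry out the minimax swap (which you set up correctly) and then, \emph{given} the attacker's mixed strategy $\bm x$ over pairs $(a,U)$, choose $\bm y$ to be the \emph{marginal of $\bm x$ on candidates}: $y_b=\sum_U x_{b,U}$. This is the same symmetry trick as in the proof of \Cref{cor:stab}. With that choice, for each voter $v$ the contribution becomes a Riemann sum of $g(x^k)$ over a subset of $[0,1]$ of width $w(v)=\sum_{a}\sum_{U\ni v}x_{a,U}$, bounded above by $\int_{1-w(v)}^1 g(x^k)\dif x$; averaging over voters and using concavity of $t\mapsto\int_{1-t}^1 g(x^k)\dif x$ (Jensen, with $\frac1n\sum_v w(v)\le\alpha$) gives the stated $\int_{1-\alpha}^1 g(x^k)\dif x$. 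So the missing idea is not the game or the convexity check, but the adaptive response $\bm y=\text{marginal}(\bm x)$ after the minimax flip; a fixed maximal lottery does not suffice.
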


\begin{lemma}[A committee is high]\label{lem:high}
For any distribution $\Delta$ over committees, there exists a committee $S$ such that for any subset of voters $U \subseteq V$ such that $|U| \geq \alpha n$, 
$$\frac1n\sum_{v \in U}g(\rank_v(S)) > \int_0^\alpha g(x)  \dif x.$$
\end{lemma}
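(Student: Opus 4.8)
I would prove \Cref{lem:high} by the probabilistic method, showing that a committee $S$ drawn from $\Delta$ itself is typically ``high'', so some $S$ in its support works. Two reductions come first. Since $g\ge 0$ is non-decreasing, for a fixed $S$ the minimum of $\frac1n\sum_{v\in U}g(\rank_v(S))$ over all $U$ with $|U|\ge\alpha n$ is attained by $U_S$, the $\alpha n$ voters with the smallest values $\rank_v(S)$; so it is enough to handle $U=U_S$. Next, writing the sorted ranks $r^{(1)}_S\le\cdots\le r^{(n)}_S$ and setting $N_S(t):=|\{v:\rank_v(S)\le t\}|$, a short layer-cake (Fubini) computation — which I would isolate as a standalone analytic claim — rewrites both sides:
$$\frac1n\sum_{v\in U_S}g(\rank_v(S)) \;=\; \alpha\,g(0)+\int_{(0,1]}\Big(\alpha-\tfrac{N_S(t)}{n}\Big)^{+}\dif g(t),\qquad \int_0^\alpha g(x)\dif x \;=\; \alpha\,g(0)+\int_{(0,1]}(\alpha-t)^{+}\dif g(t),$$
where $\dif g$ is the Stieltjes measure of $g$ (continuous on $(0,1)$ since $g(x^k)$ is convex). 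Thus it suffices to find one $S$ with $\int(\alpha-N_S(t)/n)^{+}\dif g(t)>\int(\alpha-t)^{+}\dif g(t)$; pictorially this compares the empirical curve $t\mapsto N_S(t)/n$ against the diagonal, which is the ``slope-$1$ worst case'' from the overview.

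The heart of the argument is the ``stacks'' bound: for any voter $v$ and any $t$, the committees $S'$ with $\rank_v(S')\le t$ occupy pairwise disjoint blocks of total length $\Prx_\Delta(S')$ inside $[0,t]$, so $\Prx_{S'\sim\Delta}[\rank_v(S')\le t]\le t$. Summing over voters, $\mathbb{E}_{S\sim\Delta}[N_S(t)/n]=\frac1n\sum_v\Prx_{S\sim\Delta}[\rank_v(S)\le t]\le t$ for every $t$. Now take the expectation over $S\sim\Delta$ of the integral in question: since $x\mapsto(\alpha-x)^{+}$ is convex and non-increasing, Jensen's inequality followed by the stacks bound gives, pointwise in $t$, $\mathbb{E}_{S\sim\Delta}[(\alpha-N_S(t)/n)^{+}]\ge(\alpha-\mathbb{E}_{S\sim\Delta}[N_S(t)/n])^{+}\ge(\alpha-t)^{+}$, and integrating against $\dif g$ yields $\mathbb{E}_{S\sim\Delta}\big[\int(\alpha-N_S(t)/n)^{+}\dif g(t)\big]\ge\int(\alpha-t)^{+}\dif g(t)$. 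Hence some $S$ in the support of $\Delta$ achieves the desired inequality with ``$\ge$''.

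The remaining — and fiddliest — task is upgrading ``$\ge$'' to the strict ``$>$'' in the statement, and this is where I expect most of the care to go. Strictness comes from discreteness: the candidate set is finite, so $\Delta$ has finite support, each $F_v(t):=\Prx_{S\sim\Delta}[\rank_v(S)\le t]$ is a step function equal to $t$ at only finitely many points, and therefore $\mathbb{E}_{S\sim\Delta}[N_S(t)/n]=\frac1n\sum_vF_v(t)<t$ for all $t$ outside a finite set; for such $t$ in $(0,\alpha)$ the pointwise inequality above is strict, so the whole chain is strict provided $\dif g$ puts positive mass on $(0,\alpha)$, i.e.\ provided $g$ is not constant on $[0,\alpha]$. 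That last possibility is ruled out by the standing hypotheses of \Cref{thm:meat}: if $g\equiv g(0)$ on $[0,\alpha]$ then $\int_0^\alpha g=\alpha\,g(0)$, while $g\ge g(0)$ everywhere forces $\int_{1-\alpha}^1g(x^k)\dif x\ge\alpha\,g(0)$, so the condition of \Cref{thm:meat} holds with equality, which together with monotonicity and continuity of $g$ forces $g$ constant on $[0,1]$ — contradicting non-constancy. With that case excluded, some $S$ satisfies the lemma with strict inequality.
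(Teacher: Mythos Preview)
Your argument for the weak inequality is correct and takes a genuinely different route from the paper. The paper proves \Cref{claim:high} by decomposing the expectation \emph{per voter}: for each $v$ it views $\sum_{S:v\in U_S}g(\rank_v(S))\Pr_\Delta(S)$ as a Riemann sum, lower-bounds it by $\int_0^{w(v)}g$, and then applies Jensen to the convex map $t\mapsto\int_0^t g$ using that $\frac1n\sum_v w(v)=\alpha$. You instead decompose \emph{per threshold}: after the layer-cake rewriting, you apply Jensen to the convex map $x\mapsto(\alpha-x)^{+}$ pointwise in $t$, and the role of the Riemann-sum bound is played by the clean ``stacks'' inequality $\Pr_{S\sim\Delta}[\rank_v(S)\le t]\le t$. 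Both yield $\Ev_{S\sim\Delta}\big[\frac1n\sum_{v\in U_S}g(\rank_v(S))\big]\ge\int_0^\alpha g$. Your version makes very explicit that the comparison is between the empirical ``CDF'' $t\mapsto N_S(t)/n$ and the diagonal, which is exactly the slope-$1$ worst case in the overview; the paper's version is more elementary and mirrors its proof of \Cref{lem:low}.

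On strictness, you invoke the integral hypothesis of \Cref{thm:meat} to rule out the case that $g$ is constant on $[0,\alpha]$. Strictly speaking that hypothesis is not part of the lemma's statement, so as written you have proved a slightly different statement than the one asserted. But you are right that something beyond ``$g$ non-constant, non-decreasing, $g(x^k)$ convex'' is needed: with $g(x)=\max(0,x-\alpha)$ one can build $\Delta$ (e.g.\ two voters, $\Delta$ uniform on two committees with opposite preferences) for which every $S$ in the support has $\frac1n\sum_{v\in U_S}g(\rank_v(S))=0=\int_0^\alpha g$. The paper's own strictness justification (``equality can only occur if $g$ is a step function'') does not actually exclude this edge case either; equality in each Riemann block only forces $g$ to be constant on those blocks, which is compatible with $g$ being constant on an initial segment. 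So your more careful treatment is an improvement in rigor, and since the lemma is only applied inside the proof of \Cref{thm:meat} where the extra hypothesis holds, the downstream conclusions are unaffected.
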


It is not hard to see that together, these imply \Cref{thm:meat}.

\begin{proof}[Proof of \Cref{thm:meat} assuming \Cref{lem:low,lem:high}]
Suppose that, as in the pre-conditions of \Cref{thm:meat}, $g$ is a non-constant, non-decreasing function such that $g(x^k)$ is convex
and $\int_0^\alpha g(x)  \dif x \geq \int_{1 - \alpha}^1 g(x^k) \dif x$. 

Apply \Cref{lem:high} with the distribution given by \Cref{lem:low} to obtain a committee $S$. Suppose towards a contradiction that $a$ $\alpha$-dominates $S$. Then there exists a subset of voters $U \subseteq V$ with $|U| = \alpha n$ such that for all $v \in U$, $a\cg_v S$. For each $v \in U$, we have $\rank_v(a) \geq \rank_v(S)$. Since $g$ is non-decreasing, this implies that $g(\rank_v(a)) \geq g(\rank_v(S))$, and so
$$\frac1n\sum_{v \in U} g(\rank_v(a)) \geq \frac1n\sum_{v \in U}g(\rank_v(S)).$$
On the other hand, \Cref{lem:low,lem:high} give us the guarantee that  
$$\frac1n\sum_{v \in U} g(\rank_v(a)) < \int_{1 - \alpha}^1 g(x^k) \dif x \leq \int_0^\alpha g(x)  \dif x < \frac1n\sum_{v \in U}g(\rank_v(S))$$
which is a contradiction.
\end{proof}

\subsection{\texorpdfstring{\Cref{lem:low}}{Lemma 1}: Candidates are Low}\label{ssec:low}

The starting point for proving \Cref{lem:low} is the existence of a \emph{stable lottery}, stated below. Its existence is a corollary of \Cref{lem:low} in the case where $g(x) = x$ and $\alpha = 1$, but we will sketch its proof independently as a warm-up to proving \Cref{lem:low}.

\begin{corollary}[Existence of a stable lottery \cite{DBLP:journals/teco/ChengJMW20,jiang2020approximately}]\label{cor:stab}
There exists a distribution $\Delta$ over committees of size at most $k$ such that for all candidates $a$, 
$$\Ev_{S\sim \Delta} \left[\tfrac1n|a \cg S|\right] = \frac1n\sum_{v \in V} \rank_v(a) < \frac1{k+1}.$$
\end{corollary}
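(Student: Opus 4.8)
The plan is to obtain $\Delta$ as the defender's equilibrium strategy in a two‑player zero‑sum game and then to show that the value of this game is strictly below $\frac1{k+1}$. In the game, the \emph{defender} picks a committee $S$ with $|S|\le k$, the \emph{attacker} picks a candidate $a\in C$, and the attacker is paid $\tfrac1n|a\cg S|$. The pure strategy sets are finite and the payoff is bilinear in the mixed strategies, so \Cref{thm:minimax} applies and produces a value
$$\nu \;=\; \min_{\Delta}\max_{a}\ \Ev_{S\sim\Delta}\!\left[\tfrac1n|a\cg S|\right] \;=\; \max_{\Delta_a}\min_{S}\ \Ev_{a\sim\Delta_a}\!\left[\tfrac1n|a\cg S|\right].$$
Taking $\Delta$ to be a defender‑optimal strategy, every candidate $a$ satisfies $\Ev_{S\sim\Delta}[\tfrac1n|a\cg S|]\le\nu$, and by \Cref{def:rank} this expectation equals $\tfrac1n\sum_{v\in V}\rank_v(a)$; so it remains only to prove $\nu<\tfrac1{k+1}$.

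To bound $\nu$ I would use its attacker‑moves‑first expression and exhibit, for an arbitrary attacker distribution $\Delta_a$, a committee that performs strictly better than $\tfrac1{k+1}$ against it. Fix $\Delta_a$, draw $c_1,\dots,c_k$ independently from $\Delta_a$, and set $S=\{c_1,\dots,c_k\}$ (a committee of size at most $k$; padding it to size exactly $k$ only decreases the payoff). Drawing a further candidate $a\sim\Delta_a$ for the attacker and interchanging expectations,
$$\Ev_{c_1,\dots,c_k\sim\Delta_a}\,\Ev_{a\sim\Delta_a}\!\left[\tfrac1n|a\cg S|\right] \;=\; \frac1n\sum_{v\in V}\Prx\!\bigl[a\cg_v c_i\text{ for all }i\bigr],$$
where $a,c_1,\dots,c_k$ are i.i.d.\ from $\Delta_a$. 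For a fixed voter $v$, these $k+1$ draws are exchangeable, and the event $\{a\cg_v c_i\ \forall i\}$ holds precisely when $a$ is the \emph{unique} $\cg_v$‑best candidate among the $k+1$ draws; by exchangeability its probability equals $\tfrac1{k+1}\cdot\Pr[\text{the }\cg_v\text{-best draw is unique}]$, which is strictly less than $\tfrac1{k+1}$ because the $\cg_v$‑best candidate in the support of $\Delta_a$ is drawn at least twice with positive probability (here $k+1\ge2$), and on that event the $\cg_v$‑best draw is not unique. Averaging over $v$, the right‑hand side — and hence $\min_S\Ev_{a\sim\Delta_a}[\tfrac1n|a\cg S|]$ — is $<\tfrac1{k+1}$. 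Since this quantity is a minimum of finitely many functions that are continuous (indeed linear) in $\Delta_a$, it is continuous, and as $\Delta_a$ ranges over the compact probability simplex on $C$ its maximum $\nu$ is attained, hence also $<\tfrac1{k+1}$.

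The one genuinely delicate point is the \emph{strictness}: the averaging step by itself only gives $\nu\le\tfrac1{k+1}$, and the missing slack comes exactly from the fact that over a finite candidate set $\Delta_a$ is never atomless, so a tie for the top among the $k+1$ i.i.d.\ samples happens with positive probability. Everything else is a direct application of the minimax theorem together with a symmetry/exchangeability observation — the same one that, for $k=1$, identifies $\Delta$ with a maximal lottery. (The proof of the full \Cref{lem:low} augments this setup with the ``confined adversary'' and the activation function $g$, but its backbone — set up the game, then sample the committee's members from the attacker's own distribution — is the same.)
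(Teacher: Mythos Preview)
Your proof is correct and follows essentially the same approach as the paper: set up the zero-sum game, apply minimax to swap the quantifiers, sample the defender's committee as $k$ i.i.d.\ draws from the attacker's own distribution $\Delta_a$, and use exchangeability of the $k+1$ draws to get the $\tfrac1{k+1}$ bound. Your treatment of strictness (via uniqueness of the $\cg_v$-best draw plus compactness of the simplex) is a bit more explicit than the paper's sketch, which simply notes that duplication of $a_1$ among $a_2,\dots,a_{k+1}$ forces $\tfrac1n|a_1\cg a_2,\dots,a_{k+1}|=0$, but the two arguments are equivalent.
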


\cite{DBLP:journals/teco/ChengJMW20} originally proved this result, though technically with a slightly weaker upper bound of $\frac1k$. \cite{jiang2020approximately} later gave a simpler proof with the bound above. Both proofs use a game theory formulation and a clever application of the minimax theorem. We give a brief proof sketch in the following, as a warm-up for the proof of \Cref{lem:low}.

\begin{proof}[Proof sketch of \Cref{cor:stab} \cite{jiang2020approximately}]   
Imagine that we have a game with two players, the \emph{defender} who chooses a committee $S$ of size at most $k$, and the \emph{attacker} who chooses a single candidate $a$. Given their choices, the players receive payoffs $-\tfrac1n|a \cg S|$ and $+\tfrac1n|a \cg S|$ respectively. Our goal is to show that
$$\min_{\Delta} \max_{a} \Ev_{S\sim \Delta} \left[\tfrac1n|a \cg S|\right]  \leq \frac1{k + 1}.$$
i.e., the defender can choose a \textit{distribution} $\Delta$ over committees of size $k$ such that she can expect to pay at most $\frac1{k+1}$, regardless of the attacker's choice (which may depend on $\Delta$).

By the minimax theorem, we can swap the order of the players, as follows. 
$$\min_{\Delta} \max_{a} \Ev_{S\sim \Delta} \left[\tfrac1n|a \cg S|\right]  =  \max_{\Delta_a} \min_{S} \Ev_{a\sim \Delta_a} \left[\tfrac1n|a \cg S|\right] = \max_{\Delta_a} \min_{\Delta} \Ev_{\substack{a\sim \Delta_a\\S\sim \Delta}} \left[\tfrac1n|a \cg S|\right].$$
Above $\Delta_a$ denotes a distribution over candidates.
As a result, we can now allow $\Delta$ to depend on $\Delta_a$. Suppose we simply choose $\Delta = \Delta_a^k$ (sample $k$ times i.i.d.\@ from $\Delta_a$). Then we have
$$\Ev_{\substack{a\sim \Delta_a\\S\sim \Delta_a^{k}}} \left[\tfrac1n|a \cg S|\right] =  \Ev_{a_1, \dots, a_{k+1}\sim \Delta_a} \left[\tfrac1n|a_1 \cg a_2, \dots, a_{k+1}|\right].$$
By symmetry, this is at most $\frac{1}{k + 1}$ for any $\Delta_a$. (Note, the inequality can be made strict by considering the likelihood that $a_1$ is duplicated among $a_2, \dots, a_{k+1}$, which makes $\tfrac1n|a_1 \cg a_2, \dots, a_{k+1}|$ become 0.)
\end{proof}

As an aside, we remark that \Cref{cor:stab} already resolves the randomized analogue of \Cref{q:main}. Specifically, we can say that a distribution over committees of size $k$ is \textit{$\alpha$-undominated in expectation}, if for all candidates $a$, $\displaystyle\Ev_{S\sim \Delta}\big[\tfrac1n |a \cg S|\big] < \alpha$.  \Cref{cor:stab} shows that there always exists such a distribution $\Delta$ that is $\frac{1}{k + 1}$-undominated in expectation. This bound is tight. In an election with $m \gg k$ candidates with every possible ranking in equal proportion, it is not hard to see that the best distribution is uniform over committees of size $k$. Any candidate is preferred over a committee drawn from this distribution by a $\frac{1}{k + 1}(1 - \frac{k}{m}) \to \frac{1}{k + 1}$ fraction of the voters in expectation. Note that considering $k = 1$ (where the stable lottery is the distribution from the voting rule \textit{maximal lotteries} \cite{kreweras1965aggregation,fishburn1984probabilistic}), we get $\alpha = \frac12$, meaning that Condorcet's paradox does not extend to randomized voting rules.

\vspace{10pt}

The key difference in \Cref{lem:low} is that we want a guarantee over all sufficiently small subsets of voters $U$. To account for this, we change the formulation of the game so that the attacker chooses \textit{both} a candidate $a$ and a set $U$, and only voters in $U$ contribute to the payoff. We also have the function $g$ to contend with, but this can be accommodated with relative ease thanks to the generality of the minimax theorem.

\begin{proof}[Proof of \Cref{lem:low}]
For ease of exposition, we briefly switch to using notation more common in the optimization treatment of the minimax theorem. That is, we use vectors $\bm{x}, \bm{y}$ to denote probability distributions. These take the place of (but are not exactly the same as) $\Delta_a$ and $\Delta$.

Let $\bm{x}$ be a joint distribution over candidates and sets $U$ of voters. In particular we have non-negative weights $x_{a, U}$ for each candidate $a$ and subset $U \subseteq V$ such that $|U| \leq \alpha n$, and these sum to 1. (In what follows, $U$ and $U'$ always represent subsets of at most $\alpha n$ voters.) Let $\bm{y}$ be a distribution over candidates, which assigns probability $y_b$ to candidate $b$. Consider the function
$$f(\bm{x}, \bm{y}) :=  \frac1n \sum_{v \in V} \sum_{a \in C}\sum_{U: v\in U} x_{a, U}\cdot g\Bigg(\bigg(\sum_{b:a\cg_v b} y_b\bigg)^k\Bigg).$$

Note that $g\left(\left(\sum_{b:a\cg_v b} y_b\right)^k\right)$ is precisely $g(\rank_v(a))$ when $\Delta$ is $\bm{y}^k$ (the distribution over committees of size at most $k$ obtained by sampling $k$ times i.i.d.\@ from $\bm{y}$), since
$$\rank_v(a) = \Prx_{S\sim \bm{y}^k}[a\cg_v S] = \Prx_{b\sim \bm{y}}[a\cg_v b]^k = \bigg(\sum_{b:a\cg_v b} y_b\bigg)^k.$$
With this choice, observe as well that if $\bm{x}$ is set to $\bm{e}_{a, U}$ (deterministically picking some candidate $a$ and set $U$), we have 
$$ f(\bm{e}_{a, U}, \bm{y}) = \frac1n \sum_{v\in U} g(\rank_v(a)).$$
Therefore, in order to show the desired guarantee for any choice of $a, U$, it suffices to show that
$$\min_{\bm{y}} \max_{a, U} f(\bm{e}_{a, U}, \bm{y}) = \min_{\bm{y}} \max_{\bm{x}} f(\bm{x}, \bm{y}) \leq \int_{1 - \alpha}^1g(x^k) \dif x.$$
In order to apply the minimax theorem (\Cref{thm:minimax}), we need that the domains for $\bm{x}$ and $\bm{y}$ are compact convex, and $f(\cdot, \cdot)$ is convex in the minimizer's variables (each $y_b$) and concave in the maximizer's variables (each $x_{a, U}$). The first condition is satisfied since the domains are just probability distributions over discrete sets. Since $g(x^k)$ is convex, $f$ is convex in each $y_b$, and $f$ is linear and therefore concave in each $x_{a, U}$. It follows that $f$ satisfies the conditions required for the minimax theorem and so
$$ \min_{\bm{y}} \max_{\bm{x}} f(\bm{x}, \bm{y}) =  \max_{\bm{x}} \min_{\bm{y}} f(\bm{x}, \bm{y}).$$
Like in the proof of \Cref{cor:stab}, we can choose $\bm{y}$ in terms $\bm{x}$ to give an upper bound. We choose $\bm{y}$ such that for each candidate $b$,
$$y_b = \sum_{U} x_{b, U}.$$
Using this choice, it follows that 
\begin{equation}\label{eq:mess}
 \max_{\bm{x}} \min_{\bm{y}} f(\bm{x}, \bm{y}) \leq \max_{\bm{x}} \frac1n \sum_{v \in V} \sum_{a \in C}\sum_{U: v\in U} x_{a, U}\cdot g\left(\Bigg(\sum_{b:a\cg_v b} \sum_{U'} x_{b, U'}\Bigg)^k\right).
\end{equation}
Let us fix $\bm{x}$ and bound the contribution of each voter separately. To begin with, let 
$$w(v) := \sum_{a \in C}\sum_{U: v\in U} x_{a, U} $$
denote the total weight associated with voter $v$. Observe that the average weight is at most $\alpha$:
$$\frac1n \sum_{v\in V} w(v) = \frac1n\sum_{v\in V} \sum_{U: v\in U} \sum_{a\in C} x_{a, U} = \sum_{a, U} x_{a, U}\cdot \tfrac1n|U| \leq \alpha.$$
We show the following bound on the contribution of each voter.
\begin{equation}\label{eq:voter-contribution-bound}
\sum_{a\in C}\sum_{U: v\in U} x_{a, U}\cdot g\left(\Bigg(\sum_{b:a\cg_v b} \sum_{U'} x_{b, U'}\Bigg)^k\right) < \int_{1 - w(v)}^1 g(x^k)  \dif x.
\end{equation}
To simplify a little bit, note that the arguments of $g$ do not depend on each $U$, so we can write the sum equivalently as
\begin{equation}\label{eq:voter-contribution}
\sum_{a\in C}\Bigg(\sum_{U: v\in U} x_{a, U}\Bigg)\cdot g\left(\Bigg(\sum_{b:a\cg_v b} \sum_{U'} x_{b, U'}\Bigg)^k\right).
\end{equation}
We can bound this expression from above by reinterpreting it as a Riemann sum. See \Cref{fig:riemann} for a helpful picture.

\begin{figure}[ht!] %
  \centering
  \includegraphics{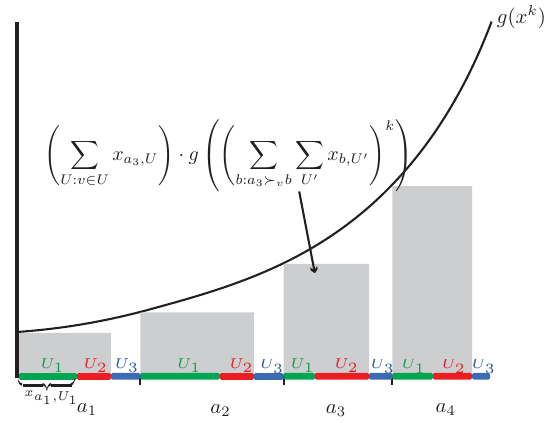} 
  \caption{A depiction of \Cref{eq:voter-contribution} as a Riemann sum. In the example, we suppose that $v \in U_1, U_2$. Note that the total width of the grey boxes is precisely $w(v)$. 
  } 
  \label{fig:riemann}
\end{figure}

Formally, suppose that $v$ has the preference order $a_1 \cl a_2 \cl \cdots \cl a_m$. Let 
$$z_i := \sum_{j = 1}^i \sum_{U} x_{a_j, U} \qquad \qquad \text{and} \qquad \qquad z_i^{(v)} := \sum_{j = 1}^i \sum_{U: v\in U} x_{a_j, U}.$$
Note that $z_m^{(v)} = w(v)$, and define $z_0^{(v)} := 0$. \Cref{eq:voter-contribution} can be written as 
$$\sum_{i = 1}^m \Big(z_i^{(v)} - z_{i - 1}^{(v)}\Big) g(z_i^k).$$
Using the fact that $g(x)$ is a non-decreasing function, we have
$$\sum_{i = 1}^m \Big(z_i^{(v)} - z_{i - 1}^{(v)}\Big) g(z_i^k) < \sum_{i = 1}^m \int_{z_i}^{z_i + z_i^{(v)} - z_{i - 1}^{(v)}} g(x^k) \dif x.$$
Note that the inequality is strict since equality can only occur if $g$ is a step function, which would contradict the assumption that $g(x^k)$ is convex and $g(x)$ is not constant. The right side considers the area under $g(x^k)$ over a subset of $[0,1]$ with total width $\sum_{i = 1}^m (z_i^{(v)} - z_{i - 1}^{(v)}) = z_m^{(v)} = w(v).$  Using the fact that $g$ is non-decreasing we can naively upper bound the total by the integral over $[1 - w(v), 1]$; the interval of length $w(v)$ in $[0, 1]$ where $g(x)$ is largest. 
It follows that
$$\sum_{i = 1}^m \Big(z_i^{(v)} - z_{i - 1}^{(v)}\Big) g(z_i^k) <  \int_{1 - w(v)}^{1} g(x^k) \dif x$$
which establishes \Cref{eq:voter-contribution-bound}. Returning to our average over the voters in \Cref{eq:mess}, all that remains is to show that
$$\frac1n \sum_{v \in V} \int_{1 - w(v)}^1 g(x^k) \dif x \leq \int_{1 - \alpha}^1 g(x^k) \dif x.$$
Define
$$p(t) := \int_{1 - t}^1 g(x^k) \dif x = \int_{0}^t g\big((1 - x)^k\big) \dif x.$$
Since $g$ is non-negative, $p(t)$ is non-decreasing. Moreover, since $g\big((1 - x)^k\big)$ is non-increasing, its integral $p(t)$ is a concave function. Therefore, by Jensen's inequality, we have that
$$\frac1n\sum_{v\in V} p(w(v)) \leq p\left(\frac1n\sum_{v\in V} w(v)\right) \leq p(\alpha) = \int_{1 - \alpha}^1g(x^k)  \dif x$$
as desired. 
\end{proof}

\subsection{\texorpdfstring{\Cref{lem:high}}{Lemma 3}: A Committee is High}\label{ssec:high}

Next, we prove \Cref{lem:high}, which shows that for any $\Delta$, there exists a committee $S$ whose ranks are high over any subset of $\alpha n$ voters. To do so, we first prove the following claim.

\begin{claim}\label{claim:high}
Fix any distribution $\Delta$ over committees. For each committee $S$, let $U_S \subseteq V$ denote the $\alpha n$ voters for whom $\rank_v(S)$ is smallest. Then
$$\Ev_{S\sim \Delta}\left[\frac1n\sum_{v \in U_S}g(\rank_v(S))\right] > \int_0^\alpha g(x)  \dif x.$$
\end{claim}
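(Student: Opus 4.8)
The key insight is that from each voter $v$'s perspective, sampling $S\sim\Delta$ and recording $\rank_v(S)$ is exactly sampling a uniform real from $[0,1]$ — this is the "blocks" visualization in Figure~\ref{fig:stacks}. More precisely, if we sample $r_v\sim U[0,1]$ and let $S$ be the committee whose block contains height $r_v$, then $\rank_v(S)$ is the height of the top of that block, which is at least $r_v$; in fact $\rank_v(S)$ is distributed so that $\Prx[\rank_v(S)\le t]\le t$ for all $t$ (it would be exactly $t$ if every committee had infinitesimal mass, and stochastically dominates $U[0,1]$ otherwise because rounding up only increases values). The correlations across voters are irrelevant because we will only use the expectation over $S$ for each fixed voter. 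So the plan is: (i) fix the distribution $\Delta$; (ii) for each voter $v$, understand the distribution of the random variable $R_v := \rank_v(S)$ where $S\sim\Delta$; (iii) swap the expectation over $S$ with the sum/selection over voters.

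**Carrying it out.** The main subtlety is that $U_S$ — the set of $\alpha n$ voters with smallest $\rank_v(S)$ — depends on $S$, so $\frac1n\sum_{v\in U_S}g(\rank_v(S))$ is not simply a sum of independent one-voter contributions. I would handle this by writing, for each fixed $S$,
\[
\frac1n\sum_{v\in U_S}g(\rank_v(S)) \;=\; \min_{U:\,|U|=\alpha n}\ \frac1n\sum_{v\in U} g(\rank_v(S)),
\]
since $U_S$ achieves this minimum (as $g$ is non-decreasing, picking the $\alpha n$ smallest ranks picks the $\alpha n$ smallest values of $g(\rank_v(S))$). Taking expectations and pushing $\Ev_{S\sim\Delta}$ inside is not valid directly against the $\min$, but the \emph{other} direction is: we want a \emph{lower} bound on $\Ev_S[\min_U \cdots]$. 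Here I would instead argue pointwise in a smarter way. For each $S$, sort the voters so that $\rank_{v_1}(S)\le\cdots\le\rank_{v_n}(S)$; then $\frac1n\sum_{v\in U_S}g(\rank_v(S)) = \frac1n\sum_{i=1}^{\alpha n} g\big(\rank_{v_i}(S)\big)$, and I claim this is at least $\frac1n\sum_{i=1}^{\alpha n} g(q_i(S))$ where $q_i(S)$ is the $i$-th smallest of the values — which is a tautology — so the real content must come from averaging over $S$. The clean way: for a uniform random voter $V^*$ (or equivalently summing over voters), and $S\sim\Delta$ independent, the pair $(\text{rank at }V^*)$ — consider the joint law. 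Actually the cleanest route, and the one I expect the authors take, is to reverse the roles: sample $S\sim\Delta$, and for the "bad" set think of it as a threshold. Define for each $S$ the $\alpha$-quantile threshold $\tau(S)$ of the multiset $\{\rank_v(S)\}_v$; then $U_S=\{v:\rank_v(S)\le\tau(S)\}$ (modulo ties) and
\[
\frac1n\sum_{v\in U_S}g(\rank_v(S)) \;\ge\; \int_0^{\alpha} g\big(\text{(quantile function of the }\rank_v(S)\text{ values)}\big)\,dx,
\]
and then take expectation over $S$, using Fubini to integrate the quantile bound against the one-voter stochastic-dominance fact $\Prx_S[\rank_v(S)\le t]\le t$.

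**The main obstacle.** The genuine difficulty is that $U_S$ is data-dependent, so one cannot naively linearize. The trick I anticipate is to bound things by a layer-cake / Fubini computation: write $g(\rank_v(S)) = \int_0^{\rank_v(S)} g'(s)\,ds$ (or a Stieltjes version if $g$ is only monotone), and for the bottom $\alpha n$ voters interchange the order of summation. Concretely, for fixed $S$, $\frac1n\sum_{v\in U_S}g(\rank_v(S))$ equals the area under the nondecreasing step function obtained by plotting the sorted $g(\rank_v(S))$ over $[0,\alpha]$; since for each level $t$ the count $\frac1n|\{v:\rank_v(S)\le t\}|$ is what determines whether voters fall in $U_S$, and since $\Ev_S\big[\frac1n|\{v:\rank_v(S)\le t\}|\big]=\frac1n\sum_v\Prx_S[\rank_v(S)\le t]\le t$, averaging over $S$ the sorted-rank curve stochastically dominates the identity line $x\mapsto x$ on $[0,\alpha]$ — hence the area is at least $\int_0^\alpha g(x)\,dx$. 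The strictness ("$>$" not "$\ge$") comes, as in Lemma~\ref{lem:low}, from the fact that $\Prx_S[\rank_v(S)\le t]$ is strictly less than $t$ whenever $\Delta$ puts positive mass on any committee (blocks have positive height, so rounding up is strict with positive probability) combined with $g$ non-constant; one must check this strictness survives the averaging, which it does since it holds for a positive-measure set of $t$. I would make sure the ties in defining $U_S$ are handled by the clone/continuity remark in the preliminaries, so that $\alpha n$ is an integer and $U_S$ is well-defined.
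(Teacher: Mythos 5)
Your approach is correct and it is genuinely different from the paper's. The paper decomposes \emph{voter by voter}: writing
\[
\Ev_{S\sim\Delta}\Big[\tfrac1n\sum_{v\in U_S}g(\rank_v(S))\Big]
=\tfrac1n\sum_{v\in V}\ \sum_{S:\,v\in U_S}g(\rank_v(S))\Prx_\Delta(S),
\]
it then bounds each inner sum by a one-voter Riemann-sum argument (the blocks with $v\in U_S$ have total width $w(v):=\sum_{S:v\in U_S}\Prx_\Delta(S)$, and concentrating them at the bottom is the worst case), obtaining $\int_0^{w(v)}g$; finally it uses $\tfrac1n\sum_v w(v)=\alpha$ and Jensen applied to the convex map $t\mapsto\int_0^t g$. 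You instead decompose \emph{threshold by threshold}: for each $S$ you write $\tfrac1n\sum_{v\in U_S}g(\rank_v(S))=\int_0^\alpha g(Q_S(x))\,dx$ with $Q_S$ the empirical quantile function of $\{\rank_v(S)\}_v$, then apply a layer-cake/Fubini swap. Made precise, Fubini gives $\int_0^\alpha g(Q_S(x))\,dx=\alpha g(0)+\int_{(0,1]}(\alpha-F_S(s^-))_+\,dg(s)$ where $F_S(t)=\tfrac1n|\{v:\rank_v(S)\le t\}|$, and one takes $\Ev_S$, applies Jensen to the convex function $(\cdot)_+$ to get $\Ev_S[(\alpha-F_S(s))_+]\ge(\alpha-\Ev_S[F_S(s)])_+\ge(\alpha-s)_+$, and integrates back against $dg$ to recover $\int_0^\alpha g$. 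The two proofs are essentially dual ways of swapping the order of summation; yours has the appeal that the only structural fact about $\Delta$ it uses is the single one-voter inequality $\Prx_S[\rank_v(S)\le t]\le t$, whereas the paper's uses the full block structure per voter but avoids Lebesgue--Stieltjes machinery. Two small cautions. First, your phrase ``the sorted-rank curve stochastically dominates the identity after averaging'' is not quite the right claim --- it is not true that $\Ev_S[Q_S(x)]\ge x$ in general; the correct statement is the Fubini+Jensen chain above applied to $g\circ Q_S$, not a pointwise dominance of $Q_S$. Second, the strictness is genuinely delicate (it is in the paper too): it comes from $g$ being non-constant with $g(x^k)$ convex, hence not a step function, so $dg$ is not supported on the finitely many $s$ where $\Prx_S[\rank_v(S)\le s]=s$; you gesture at this correctly but should confirm it survives the integration against $dg$, paralleling the paper's remark that equality ``can only occur if $g$ is a step function.''
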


\Cref{lem:high} follows easily from \Cref{claim:high}. 

\begin{proof}[Proof of \Cref{lem:high} assuming \Cref{claim:high}]
By the probabilistic method, \Cref{claim:high} implies that there exists a committee $S$ in the support of $\Delta$ such that $\frac1n\sum_{v \in U_S}g(\rank_v(S)) > \int_0^\alpha g(x)  \dif x.$ But then, for any set $U \subseteq V$ of size at least $\alpha n$, since $g$ is non-decreasing and $U_S$ is the set of voters $v$ for whom $\rank_v(S)$ is minimal, we have
\[
\frac1n\sum_{v \in U}g(\rank_v(S))\geq \frac1n\sum_{v \in U_S}g(\rank_v(S)) > \int_0^\alpha g(x)  \dif x. \qedhere
\]
\end{proof}

\begin{proof}[Proof of \Cref{claim:high}]
First, we write the expectation in a way that we can isolate the contribution of each individual voter.
\begin{equation}\label{eq:expectation}
\Ev_{S\sim \Delta}\left[\frac1n\sum_{v \in U_S}g(\rank_v(S))\right] = \frac1n\sum_{v\in V} \sum_{S: v\in U_S} g(\rank_v(S))\cdot \Prx_\Delta(S).
\end{equation}
Let 
$$w(v) :=  \sum_{S: v\in U_S} \Prx_\Delta(S)$$
be the total probability mass associated with voter $v$. Observe that the average of $w(v)$ is $\alpha$, as
$$\frac1n \sum_{v \in V} w(v) = \frac1n \sum_{v \in V} \sum_{S: v\in U_S} \Prx_\Delta(S) = \sum_{S}\Prx_\Delta(S)\cdot \tfrac1n|U_S| = \alpha.$$
We bound the contribution of each voter in terms of $w(v)$. In particular, we show that for each $v$, 
\begin{equation}\label{eq:high-per-voter}
\sum_{S: v\in U_S} g(\rank_v(S))\cdot \Prx_\Delta(S) > \int_{0}^{w(v)} g(x) \dif x.
\end{equation}

\begin{figure}[ht!] %
  \centering
  \includegraphics{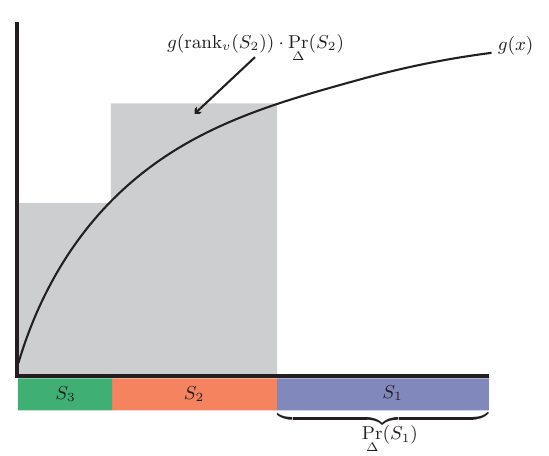} 
  \caption{A depiction of \Cref{eq:high-per-voter} as a Riemann sum. In the example, $v$ has the preference $S_1 \cgeq S_2 \cgeq S_3$, and $v \in U_{S_2}, U_{S_3}$. Note that the total width of the grey boxes is precisely $w(v)$. 
  } 
  \label{fig:riemann-high}
\end{figure}

The intuition is that like in the proof of \Cref{lem:low}, we can interpret the sum as a Riemann sum, and the worst case (minimizing the left side) is that all of the weight for committees $S$ such that $v \in U_S$ is concentrated in the bottom (see \Cref{fig:riemann-high}). 
More formally, since $g$ is non-decreasing,
$$\sum_{S: v\in U_S} g(\rank_v(S))\cdot \Prx_\Delta(S) > \sum_{S: v\in U_S} \int_{\rank_v(S) - \Prx_\Delta(S)}^{\rank_v(S)} g(x) \dif x.$$
The inequality is strict since equality can only occur if $g$ is a step function, which would contradict the assumption that $g(x^k)$ is convex and $g(x)$ is not constant.  The right side considers the area under $g(x)$ over a subset of $[0,1]$ with total width $\sum_{S: v\in U_S} \Prx_\Delta(S) = w(v).$  Using the fact that $g$ is non-decreasing we can naively lower bound the total by the integral over $[0, w(v)]$; the interval of length $w(v)$ in $[0, 1]$ where $g(x)$ is smallest. It follows that 
$$\sum_{S: v\in U_S} \int_{\rank_v(S) - \Prx_\Delta(S)}^{\rank_v(S)} g(x) \dif x \geq \int_{0}^{w(v)} g(x) \dif x$$
which establishes \Cref{eq:high-per-voter}. Returning to \Cref{eq:expectation}, we have shown that 
$$\Ev_{S\sim \Delta}\left[\frac1n\sum_{v \in U_S}g(\rank_v(S))\right] > \frac1n \sum_{v\in V} \int_0^{w(v)} g(x) \dif x.$$
The function $p(t) := \int_0^{t} g(x) \dif x$ is convex since $g(x)$ is non-decreasing. Then, by Jensen's inequality,
$$\frac1n \sum_{v\in V} \int_0^{w(v)} g(x) \dif x = \frac1n \sum_{v\in V} p(w(v)) \geq  p\left(\frac1n\sum_{v\in V} w(v)\right) = p(\alpha) = \int_{0}^\alpha g(x)  \dif x.$$
This proves the claim.
\end{proof}

\subsection{The Optimal Activation Function}\label{ssec:opt-g}

In this section, we explore what choice for the function $g$ gives us the strongest tradeoff between $\alpha$ and $k$. As we will see, the improvement over \Cref{thm:main} is marginal, but it demonstrates the best we can do with our techniques. 

Our goal is to find the non-constant non-decreasing function $g$ such that $g(x^k)$ is convex and

\begin{equation}\label{eq:condition-g}
\int_0^\alpha g(x)  \dif x \geq \int_{1 - \alpha}^1 g(x^k)\dif x
\end{equation}
is satisfied for the smallest possible $\alpha$. First, we claim that no such $g$ works if $\alpha^{1/k} < 1 - \alpha$. If this is true, then for all $x \leq \alpha$, we have $x \leq (1 - x)^k$. Using the fact that $g$ is non-decreasing, 
\[\int_0^\alpha g(x) \dif x \leq \int_0^\alpha g\big((1 - x)^k\big) \dif x = \int_{1 -\alpha}^1 g(x^k) \dif x.\]
Equality holds only if $g$ is constant, so this contradicts \Cref{eq:condition-g}. We note that this limitation implies that using \Cref{thm:meat}, we cannot show the existence of $\alpha$-undominated committees of size $k$ for $\alpha = o\big(\frac{\log k}{k}\big)$.

\vspace{10pt}

Henceforth, we assume that $\alpha^{1/k} \geq 1 - \alpha$. Using a change of variables $h(x) = g(x^k)$, \Cref{eq:condition-g} can be written equivalently as 
\begin{equation}\label{eq:condition-h-initial}
\int_0^{\alpha^{1/k}} kx^{k-1}h(x)  \dif x \geq \int_{1 - \alpha}^1 h(x) \dif x,
\end{equation}
which can be rearranged to 
\begin{equation}\label{eq:condition-h}
\int_{1-\alpha}^{\alpha^{1/k}} \big(kx^{k-1}-1\big) h(x)  \dif x \geq \int_{\alpha^{1/k}}^1 h(x) \dif x - \int_{0}^{1-\alpha} kx^{k-1} h(x) \dif x.
\end{equation}
The function $h$ must be non-constant, non-decreasing, and convex. We show that it suffices to consider functions of the form 
$$h(x) = \max(0, x - t)$$
for some parameter $t$. We call these \textit{shifted ReLU} functions. In particular, our goal is to show that if some arbitrary $h$ satisfies \Cref{eq:condition-h}, then we can use it to construct a shifted ReLU $\hat{h}$ that also satisfies \Cref{eq:condition-h}. We do this by making a series of transformations to $h$ that keep \Cref{eq:condition-h} invariant, summarized in \Cref{fig:best-h}.

\begin{figure}[ht!] %
  \centering
  \includegraphics{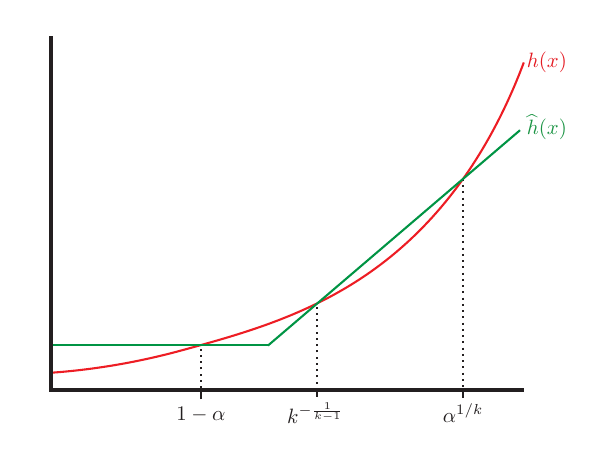} 
  \caption{The key transformation from $h$ to $\hat{h}$ that maintains \Cref{eq:condition-h}. After a linear transformation, the green curve becomes a shifted ReLU function.}
  \label{fig:best-h}
\end{figure}

Our first transformation is to replace $h$ with the function $$h_1(x) = \max(h(1 - \alpha), h(x)).$$
In \Cref{eq:condition-h}, this change only affects the last term on the right side. It increases $h$ on the interval $[0, 1 - \alpha]$, which in turn decreases the right side. Therefore if $h$ satisfies \Cref{eq:condition-h}, so does $h_1$. We can also observe that \Cref{eq:condition-h} is invariant to linear transformations; $h(x) \leftarrow h(x) + C$ or $h(x) \leftarrow C\cdot h(x)$. (This observation may be easier to see with \Cref{eq:condition-g}.) It follows that if $h$ satisfies \Cref{eq:condition-h}, so does $$h_2(x) = \max(0, h(x) - h(1 - \alpha)).$$

With this last transformation, we can assume that $h(x) = 0 $ for $x \leq 1 - \alpha$, and so \Cref{eq:condition-h} simplifies to 
\begin{equation}\label{eq:condition-h-simplfied}
\int_{1-\alpha}^{\alpha^{1/k}} \big(kx^{k-1}-1\big) h(x)  \dif x \geq \int_{\alpha^{1/k}}^1 h(x) \dif x.
\end{equation}
Intuitively, we would like $h(x)$ to be as small as possible in the interval $[\alpha^{1/k}, 1]$, and the subset of $[1-\alpha, \alpha^{1/k}]$ where $kx^{k-1} - 1$ is negative, and as large as possible in the subset of $[1-\alpha, \alpha^{1/k}]$ where $kx^{k-1} - 1$ is positive. Using the fact that $h$ is convex and non-decreasing, we can achieve this by replacing $h(x)$ on the interval $[1- \alpha, 1]$ with the maximum of a single line and 0, as follows.
\begin{itemize}
    \item If $k^{\frac{1}{1-k}} \geq \alpha^{1/k}$, replace $h$ with the line tangent to $h$ at $x = \alpha^{1/k}$.
    \item Otherwise, replace $h$ with the line passing through $\Big(k^{\frac{1}{1-k}}, h\big(k^{\frac{1}{1-k}}\big)\Big)$ and $\Big(\alpha^{1/k}, h\big(\alpha^{1/k}\big)\Big)$. 
\end{itemize}
This transformation decreases $h$ on the intervals $[1 - \alpha, k^{\frac{1}{1-k}}]$ and $[\alpha^{1/k}, 1]$ (which increase and decrease the left and right sides of \Cref{eq:condition-h} respectively), and increases $h$ on the (possibly empty) interval $[k^{\frac{1}{1-k}}, \alpha^{1/k}]$ (which increases the left side of \Cref{eq:condition-h}). The overall effect is that \Cref{eq:condition-h} is maintained.

With one last transformation scaling $h$ so that the line has slope 1, we have that $h$ is of the form $h(x) = \max(0, x - t)$ for some $t \in [1 - \alpha, \alpha^{1/k}]$ as desired.

\vspace{10pt}

It remains to determine which choice of $t$ allows the smallest $\alpha$ to satisfy \Cref{eq:condition-h}. For the shifted ReLU functions (with $t \in [1 - \alpha, \alpha^{1/k}]$), the equivalent \Cref{eq:condition-h-initial} simplifies once to
 $$ \int_{t}^{\alpha^{1/k}} kx^{k-1} (x - t)  \dif x \geq \int_t^1 (x - t)  \dif x.$$
The integral on the right is simply $\frac{(1 - t)^2}{2}$, whereas the integral on the left is
\begin{align*}
\left[\frac{k}{k+1}x^{k+1}\right]_{t}^{\alpha^{1/k}}   - t\left[x^k\right]_{t}^{\alpha^{1/k}} = \frac{k}{k+1}\alpha^{\frac{k+1}{k}} + \frac1{k+1} t^{k+1} - t\alpha.
\end{align*}
Therefore, the condition that we get is
\begin{equation}\label{eq:alpha-t-condition}
\frac{k}{k+1}\alpha^{\frac{k+1}{k}} + \frac1{k+1} t^{k+1} - t\alpha - \frac{(1 - t)^2}{2} \geq 0.
\end{equation}
If we fix $\alpha$, a sufficient choice of $t$ is the one that maximizes the expression on the left. Its derivative with respect to $t$ is precisely $-\alpha + t^k + (1 - t)$. 
It is not hard to check that at the boundaries where $t = 1- \alpha$ or $t = \alpha^{1/k}$, we get a stronger condition on $\alpha$ (and therefore a worse guarantee), so it suffices to consider $\alpha = t^k - t + 1$, where the derivative with respect to $t$ is 0. (Note as well that this choice is valid, since $t \in [1 - \alpha, \alpha^{1/k}]$ for $\alpha = t^k - t + 1$ and any $k \geq 1$.)
Substituting into the left side of \Cref{eq:alpha-t-condition}, we get the following condition on $t$
\begin{align*}
0 &\leq \frac{k}{k+1}( t^k - t + 1)^{\frac{k+1}{k}} + \frac1{k+1} t^{k+1} - t( t^k - t + 1) - \frac{(1 - t)^2}{2}\\
&= \frac{k}{k+1} \left((t^k - t + 1)^{\frac{k+1}{k}} - t^{k + 1}\right) - \frac{1 - t^2}{2}.
\end{align*}

Finally, we can conclude with the following theorem that gives the best tradeoff between $\alpha$ and $k$ that can be obtained with \Cref{thm:meat}.

\begin{theorem}\label{thm:stronger}
Suppose that  $k\geq 1$, and $t \in [0, 1]$ satisfies
$$\frac{k}{k+1} \left((t^k - t + 1)^{\frac{k+1}{k}} - t^{k + 1}\right) \geq \frac{1 - t^2}{2}.$$
then in any profile, there exists a committee of $k$ candidates that is $(t^k - t + 1)$-undominated.
\end{theorem}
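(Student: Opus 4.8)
The plan is to derive \Cref{thm:stronger} as a consequence of \Cref{thm:meat} by exhibiting, for the given $k$ and $t$, an activation function $g$ that certifies the pre-condition of \Cref{thm:meat} with $\alpha = t^k - t + 1$. The natural candidate is the shifted ReLU in the $h$-variable, i.e.\ $h(x) = \max(0, x - t)$, which corresponds to $g(x) = h(x^{1/k}) = \max\!\big(0, x^{1/k} - t\big)$. First I would verify that this $g$ is admissible for \Cref{thm:meat}: it maps $[0,1]$ to $\mathbb{R}_{\geq 0}$, it is non-decreasing and non-constant (as long as $t < 1$, which holds since $\alpha = t^k - t + 1 > 0$), and $g(x^k) = h(x) = \max(0, x-t)$ is convex. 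This is the routine part.

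Next I would translate the pre-condition $\int_0^\alpha g(x)\dif x \geq \int_{1-\alpha}^1 g(x^k)\dif x$ of \Cref{thm:meat} into a condition purely in $k$ and $t$. Following the computation already carried out in \Cref{ssec:opt-g}: after the change of variables $h(x) = g(x^k)$, the condition becomes $\int_t^{\alpha^{1/k}} kx^{k-1}(x-t)\dif x \geq \int_t^1 (x-t)\dif x$ (using $h(x) = 0$ for $x \le t$ and $t \le \alpha^{1/k}$, $1-\alpha \le t$). I would need to check that the choice $\alpha = t^k - t + 1$ indeed satisfies $1 - \alpha \le t \le \alpha^{1/k}$ for all $k \ge 1$ and $t \in [0,1]$; the first inequality $1-\alpha = t - t^k \le t$ is immediate since $t^k \ge 0$, and the second, $t^k \le \alpha = t^k - t + 1$, i.e.\ $t \le 1$, also holds. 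Evaluating both integrals in closed form (right side $= \tfrac{(1-t)^2}{2}$; left side $= \tfrac{k}{k+1}\alpha^{(k+1)/k} + \tfrac{1}{k+1}t^{k+1} - t\alpha$, exactly as in \Cref{eq:alpha-t-condition}), and then substituting $\alpha = t^k - t + 1$ and simplifying $-t\alpha - \tfrac{(1-t)^2}{2} + \tfrac{1}{k+1}t^{k+1}$ using $\alpha = t^k-t+1$, collapses the inequality to
\[
\frac{k}{k+1}\left((t^k - t + 1)^{\frac{k+1}{k}} - t^{k+1}\right) \geq \frac{1 - t^2}{2},
\]
which is precisely the hypothesis. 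Hence the pre-condition of \Cref{thm:meat} holds with this $\alpha$ and this $g$, and \Cref{thm:meat} yields a $(t^k - t + 1)$-undominated committee of size $k$.

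The only mildly delicate point — and the step I expect to require the most care — is the algebraic simplification from \Cref{eq:alpha-t-condition} down to the stated form: one must substitute $\alpha = t^k - t + 1$ consistently into each term, verify that the $t^{k+1}$ and the quadratic-in-$t$ pieces recombine as claimed (the $\frac{1-t^2}{2}$ on the right side comes from combining $-\tfrac{(1-t)^2}{2}$ with a $-t\alpha + \tfrac{1}{k+1}t^{k+1}$ contribution, using $t\alpha = t^{k+1} - t^2 + t$), and confirm no sign errors creep in. Everything else is a direct appeal to \Cref{thm:meat} together with the admissibility check on $g$. It is also worth remarking why $\alpha = t^k - t + 1$ is the right choice rather than an arbitrary $\alpha$: the left side of \Cref{eq:alpha-t-condition}, viewed as a function of $t$ for fixed $\alpha$, has derivative $-\alpha + t^k + (1-t)$, which vanishes exactly at $\alpha = t^k - t + 1$; parametrizing by $t$ in this way both makes the optimal choice of $t$ automatic and keeps the final inequality clean.
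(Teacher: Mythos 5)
Your proposal is correct and follows essentially the same route as the paper: apply \Cref{thm:meat} with the shifted-ReLU choice $h(x)=\max(0,x-t)$, i.e.\ $g(x)=\max(0,x^{1/k}-t)$, verify admissibility and the bracketing $1-\alpha\le t\le\alpha^{1/k}$, and reduce the integral condition to \Cref{eq:alpha-t-condition}, which upon substituting $\alpha=t^k-t+1$ collapses to the stated inequality. You streamline by taking the shifted ReLU as given rather than re-deriving its optimality (which the paper does only to show this is the best attainable from \Cref{thm:meat}, not as part of the proof of the theorem itself); one tiny slip is the claim that $\alpha>0$ forces $t<1$ — it does not (at $t=1$ one has $\alpha=1$), but in that case $g$ is constant and \Cref{thm:meat} is inapplicable, while the conclusion ($1$-undominated committee) holds vacuously, so the edge case should simply be dispatched directly rather than by that implication.
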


\Cref{tab:best-alpha} gives the best bounds on $\alpha$ that can be obtained using \Cref{thm:stronger} for small choices of $k$. For comparison, we also include the simple lower bound of $\frac{2}{k + 1}$ and what \Cref{thm:main} gives. We can see that \Cref{thm:main} is a fairly high-fidelity approximation of \Cref{thm:stronger}. See \Cref{fig:comparison-plot} for a graphical comparison between our best upper bounds and the lower bounds.

\begin{table}[h!]
\centering
\begin{tabular}{c|c|c|c}
$k$ & Lower bound $(\alpha = \frac{2}{k+1})$ & \Cref{thm:stronger} & \Cref{thm:main} ($\frac{\alpha}{1 - \ln\alpha} = \frac{2}{k+1}$) \\ \hline
2 & 0.667      & 0.798134      & 0.808436      \\ \hline
3 & 0.5        & 0.673795      & 0.687411      \\ \hline
4 & 0.4        & 0.588554      & 0.602600      \\ \hline
5 & 0.333      & 0.525719      & 0.539214      \\ \hline
6 & 0.286      & 0.477066      & 0.489702     \\ \hline
7 & 0.25       & 0.438041      & 0.449760      \\ \hline
8 & 0.222       & 0.405896      & 0.416735      \\ 
\end{tabular}
\caption{Best bounds on $\alpha$ such that $\alpha$-undominated committees of size $k$ exists. For example, the entry in row $k = 6$ for column \Cref{thm:stronger} shows that every election has a $0.4771$-undominated set of size $6$.}
\label{tab:best-alpha}
\end{table}

\begin{figure}
\centering
\begin{tikzpicture}%
    \begin{axis}[
        xlabel={$k$},               %
        ylabel={$\alpha$},               %
        xmin=0, xmax=100,       %
        ymin=0, ymax=1,       %
        ytick={0.1, 0.2, 0.3, 0.4, 0.5, 0.6, 0.7, 0.8, 0.9, 1},  %
        grid=both,                        %
        width=0.8\textwidth,                 %
        height=0.4\textwidth              %
    ]
        \addplot+[only marks, mark=*, mark size=0.85pt, color=blue] coordinates {
            (1, 1.0)
(2, 0.798134487238003)
(3, 0.6737945109704879)
(4, 0.5885539304003471)
(5, 0.5257193902904638)
(6, 0.4770664277484019)
(7, 0.4380411081092349)
(8, 0.40589630497820994)
(9, 0.3788651145522527)
(10, 0.35575337317625555)
(11, 0.33572157659422386)
(12, 0.31816031178709103)
(13, 0.30261525537743417)
(14, 0.2887400110773777)
(15, 0.27626534691755045)
(16, 0.26497849637683135)
(17, 0.25470885632003504)
(18, 0.245317864183352)
(19, 0.23669169512720223)
(20, 0.22873588887484342)
(21, 0.22137132958319705)
(22, 0.21453120493592237)
(23, 0.2081586642051293)
(24, 0.20220500055743085)
(25, 0.1966282169769631)
(26, 0.19139189339568385)
(27, 0.1864642748108355)
(28, 0.18181753690110036)
(29, 0.17742718463096696)
(30, 0.17327156420465495)
(31, 0.16933145224650814)
(32, 0.16558972181970988)
(33, 0.16203105870694667)
(34, 0.15864172534941912)
(35, 0.15540935590348848)
(36, 0.15232279302259555)
(37, 0.14937193594028053)
(38, 0.1465476183300195)
(39, 0.1438415013751504)
(40, 0.14124598222430018)
(41, 0.13875410994260307)
(42, 0.13635951838456073)
(43, 0.1340563666602982)
(44, 0.13183928029466585)
(45, 0.12970331016595438)
(46, 0.12764388760672274)
(47, 0.12565678869660513)
(48, 0.12373810184522815)
(49, 0.12188419938114692)
(50, 0.12009171139279073)
(51, 0.1183575034060188)
(52, 0.11667865450084414)
(53, 0.11505244248833701)
(54, 0.11347632274071484)
(55, 0.11194791649915659)
(56, 0.11046499609671256)
(57, 0.1090254754339306)
(58, 0.10762739561722512)
(59, 0.10626891495642277)
(60, 0.10494830501288799)
(61, 0.10366393668685958)
(62, 0.10241427483392751)
(63, 0.10119787384386147)
(64, 0.10001336655814819)
(65, 0.09885946255826128)
(66, 0.09773494231250268)
(67, 0.09663865053779186)
(68, 0.09556949385585656)
(69, 0.09452643426586704)
(70, 0.09350848945433421)
(71, 0.09251472636117064)
(72, 0.09154425647369424)
(73, 0.09059623796542027)
(74, 0.08966986851031222)
(75, 0.08876438493217875)
(76, 0.08787906032817905)
(77, 0.08701320035948101)
(78, 0.08616614554185043)
(79, 0.08533726501824879)
(80, 0.08452595461234846)
(81, 0.08373164175020065)
(82, 0.08295377413709693)
(83, 0.08219182727850238)
(84, 0.08144529745869877)
(85, 0.08071370243758702)
(86, 0.07999658130581255)
(87, 0.07929349091483084)
(88, 0.07860400746745899)
(89, 0.07792772382294255)
(90, 0.07726425024745576)
(91, 0.07661321173133284)
(92, 0.07597424789226459)
(93, 0.07534701202436034)
(94, 0.07473117273997554)
(95, 0.07412640958201344)
(96, 0.07353241467426275)
(97, 0.07294889179103126)
(98, 0.0723755554297617)
(99, 0.07181213161565292)
(100, 0.0712583543880394) %
        };
        \addlegendentry{\Cref{thm:stronger}}       %

        \addplot [
    domain=1:100, 
    samples=100, 
    color=red,
    style={very thick}
]
{2/(x+1)}; %
        \addlegendentry{$\alpha = \frac{2}{k + 1}$} %
\addplot [
    domain=1:100, 
    samples=100, 
    color=cyan,
    style={very thick}
]
{16/x};\addlegendentry{$\alpha = \frac{16}{k}$}
    \end{axis}
\end{tikzpicture}
    \caption{Comparison between the guarantee of \Cref{thm:stronger}, the lower bound, and the upper bound implied by \cite{jiang2020approximately} for $k\leq 100$.}
    \label{fig:comparison-plot}
\end{figure}

\section{Large Committees and Approximate Stability}\label{sec:core}
As we observed in the details of \Cref{ssec:opt-g}, \Cref{thm:meat} cannot be used to show that $\alpha$-undominated committees of size $k$ exist for $\alpha = o\big(\frac{\log k}{k}\big)$. In fact, there is a more fundamental reason for this, which is that the ultimate distribution $\Delta$ from which we sample our committee is a product distribution, and we sample from this distribution in a one-shot manner without any adaptivity. To give some intuition for why this gets stuck with a log factor, imagine that we have a large number of candidates with an equal number of voters with cyclic rankings over the candidates. Selecting a committee~$S$ and evaluating $\max_{a \in C} \frac1n |a\cg S|$ is essentially equivalent to choosing $k$ points on a circle of circumference 1, and evaluating the largest gap between adjacent points. Now, imagine that we sample the $k$ points i.i.d.\@ from some distribution. It is not hard to see that the distribution that minimizes the expected largest gap is the uniform distribution, and in that case, the expected largest gap grows as $\Theta\big(\frac{\log k}{k}\big)$.

A natural, more adaptive approach would be to sample just a small piece of the committee, assess how well it performs, and target the remaining committee to the voters where the initial committee is unsatisfactory. \cite{jiang2020approximately} do this by first sampling $(1 - \gamma)k$ candidates, and then recursing on the $\beta$ fraction of voters for whom $\rank_v(S)$ is the smallest (eventually with $\gamma = \frac12$ and $\beta = \frac14$).

In more detail, using \Cref{cor:stab} with $k \leftarrow (1 - \gamma)k$, we can sample from a distribution $\Delta$ such that for any candidate $a$, the total rank below $a$ is at most $\frac{1}{1 + (1 - \gamma)k} < \frac{1}{(1 - \gamma)k}$. On the other hand, when we sample $S \sim \Delta$, the expected fraction of voters for which $\rank_v(S) \leq \beta$ is at most $\beta$. Take a committee $S$ with this property, and let $W = \{v\in V: \rank_v(S) \leq \beta\}$. Considering the voters outside of $W$, we have that for any candidate $a$
$$\beta \cdot \tfrac1n |\{v\notin W: a\cg_v S\}| \leq \frac1n \sum_{\substack{v \notin W\\ a \cg_v S}} \rank_v(S) \leq  \frac1n \sum_{\substack{v \notin W\\ a \cg_v S}} \rank_v(a) < \frac{1}{(1 - \gamma)k}.$$
So, for any candidate $a$, the total number of voters that are outside $W$ and prefer $a$ over $S$ is at most $\frac{n}{\beta(1 - \gamma)k}$. Note that as a fraction of all of the voters, this is $O(1/k)$ when $\beta$ and $\gamma$ are constants.
Intuitively, we will treat this fraction as a sunk cost, and recurse on the voters in $W$. As the number of voters and committee sizes decrease exponentially, so do the costs incurred in each step. In the end, the cost of the first step will dominate, still giving us an $O(1/k)$ guarantee on the fraction of voters that prefers any $a$ over our total committee.

In more detail, when we recurse on the voters in $W$, the parameters
$k$ and $n$ are replaced with $\gamma k$ and $\beta n$. Following the same reasoning, after the next step, the maximum number of voters we give up on in the second step is at most $\frac{\beta \cdot n}{\gamma\cdot \beta(1 - \gamma)k}$. Continuing until the committee size becomes 1, the total number of voters that prefers $a$ over all of our committees is at most
$$\frac{n}{\beta(1 - \gamma)k} \sum_{t = 0}^{\infty}\left(\frac{\beta}{\gamma}\right)^t = \frac{n}{k} \cdot \frac{\gamma}{\beta(1 - \gamma)(\gamma - \beta)}.$$

This expression is minimized when $\gamma = \frac12$ and $\beta = \frac14$. Thus, for any candidate $a$, the fraction of voters that prefers $a$ over the union of committees we chose is at most $\frac{16}{k}$. %

It is natural to expect that building the committee in pieces adaptively is much better when the committee size is large, but unfortunately, it takes quite a while for the benefits to kick in. For example, we can see that we already lose a factor of $8$ in the first step of the recursion, largely due to the blunt binary evaluation of whether or not $\rank_v(S) \leq \beta$ (compared to the more fine-grained evaluation in \Cref{claim:high}). There is also the issue that for very small values of $k$, there is very little room to recurse. All of these reasons explain why there is a divergence of approaches for small and large values of $k$.

Nonetheless, we can show that some of the tools that we used in \Cref{sec:mainthm} can be leveraged to improve the asymptotic result of \cite{jiang2020approximately}.

\begin{reptheorem}{thm:stable}
In any election, there exists a $\frac{9.8217}{k}$-undominated committee of size $k$.
\end{reptheorem}

Note that using \Cref{prop:stable-undominated} (in \Cref{sec:equiv}), \Cref{thm:stable} implies the existence of $9.8217$-stable committees of size $k$ in the preference model of rankings over candidates. This result is only nontrivial  for $k \geq 10$, but it is enough to establish that Condorcet winning sets of size at most 20 always exist.

\begin{proof}
For ease of exposition, we replace $9.8217$ with a variable $c$, and the proof will impose constraints on $c$ that will be satisfied by $c = 9.8217$.
We will show by strong induction on $k$ that in any election, there exists a $\frac{c}{k}$-undominated committee of size $k$.

For the base case, we can use \Cref{thm:main} to establish the result for $k \leq 495$. (More straightforwardly, we could use the fact that the statement on its own is trivial for $k \leq 9$, but as we will see at the end of the proof, this would give us a less precise result.) Now suppose that $k \geq 496$ and that the result holds for all committee sizes $k' < k$.

Let $\beta, \gamma \in (0,1)$ be parameters to be set later. Like in \cite{jiang2020approximately}, we will first pick a committee $S$ of size $(1 - \gamma)k$, and then recurse on the $\beta$ fraction of voters which least like $S$. To use the tools in \Cref{sec:mainthm}, we will fix a function $g \colon [0,1]\to \mathbb{R}_{\geq 0}$ that is non-constant, non-decreasing, and such that $g\big(x^{(1 - \gamma)k}\big)$ is convex. Let $\Delta$ be the distribution obtained from \Cref{lem:low} when setting $k\leftarrow (1 - \gamma)k$ and $\alpha \leftarrow 1$. Then, we have that for all candidates $a$,
\begin{equation}\label{eq:stable-low}
\frac1n\sum_{v \in V} g(\rank_v(a)) < \int_{0}^1 g\Big(x^{(1 - \gamma)k}\Big)  \dif x.
\end{equation}
(We note that we could improve the right side by setting $\alpha \leftarrow \frac{c}{k}$ instead. As it turns out, the right choice of $g$ will have $ g\big(x^{(1 - \gamma)k}\big) =0$ for $x\leq 1-\frac{c}{k}$ anyway, so this weaker version is sufficient.)

On the other hand, since $g$ is non-decreasing, if we sample $S\sim \Delta$ the expected fraction of voters for which $g(\rank_v(S)) < g(\beta)$ is at most $\beta$. Fix some $S$ in the support of $\Delta$ with this property. Let $W = \{v\in V: g(\rank_v(S)) < g(\beta)\}$ and $U_a = \{v\in V\setminus W: a \cg_v S\}$ for each candidate $a$. For all $v \in U_a$, we have that $g(\rank_v(S)) \geq g(\beta)$, so combining with \Cref{eq:stable-low}, we get
$$g(\beta) \cdot \tfrac1n|U_a| \leq \frac1n \sum_{v \in U_a} g(\rank_v(S)) \leq \frac1n\sum_{v \in U_a} g(\rank_v(a)) < \int_{0}^1 g\Big(x^{(1 - \gamma)k}\Big) \dif x.$$
Therefore, the number of voters in $U_a$ is at most $n\cdot \frac1{g(\beta)}\int_{0}^1 g\big(x^{(1 - \gamma)k}\big) \dif x$. Applying the inductive hypothesis to the voters in $W$ with $k \leftarrow \gamma k$, we can choose a committee $S'$ of $\gamma k$ voters such that for any candidate $a$, at most 
$$\frac{c}{\gamma k} \cdot \beta n = \frac{\beta c}{\gamma k}\cdot n$$
voters in $W$ prefer $a$ over $S'$. Therefore, the fraction of voters that prefer $a$ over $S \cup S'$ is at most
$$\frac{\beta c}{\gamma k} + \frac1{g(\beta)}\int_{0}^1 g\Big(x^{(1 - \gamma)k}\Big) \dif x.$$
The induction is complete so long as we can choose $\beta, \gamma$, and $g$ such that 
$$\frac{\beta c}{\gamma k} + \frac1{g(\beta)}\int_{0}^1 g\Big(x^{(1 - \gamma)k}\Big) \dif x \leq \frac{c}{k}.$$
This inequality can be written equivalently as
$$c \geq \frac{\gamma k}{(\gamma - \beta)g(\beta)}\int_{0}^1 g\Big(x^{(1 - \gamma)k}\Big) \dif x.$$
Ultimately, we would like for the right side to be as small as possible. Since the conditions on $g$ are invariant to scaling, we can normalize so that $g(\beta) = 1$. As shorthand, let $\hat{\beta} = \beta^{\frac{1}{(1 - \gamma)k}}$. If we let $h(x) = g\big(x^{(1 - \gamma)k}\big)$, then $h$ is a non-constant, non-decreasing, convex function such that $h(\hat{\beta}) = 1$. What remains is to determine the best choice of $h$ with these properties.

It is not hard to see that if we take any arbitrary $h$, and replace it with the shifted ReLU that is tangent to $h$ at $x = \hat{\beta}$, then this reduces the value of $h(x)$ for each $x \in [0, 1]$ and thereby reduces $\int_{0}^1 h(x) \dif x$. Therefore, the best choice $h$ is of the form 
$$h(x) = \frac{\max(x - t, 0)}{\hat{\beta} - t}$$
for some $0 \leq t < \hat{\beta}.$ Then we have
$$\int_{0}^1 h(x) \dif x = \int_{t}^1 \frac{x - t}{\hat{\beta} - t} \dif x = \frac{(1 - t)^2}{2(\hat{\beta} - t)}.$$
It is not hard to check that at $t = 2\hat{\beta} - 1$, this expression attains its minimum value of $2(1 - \hat{\beta})$. Therefore, with this choice of $h$ (and thereby $g$), it suffices to find $\beta$ and $\gamma$ such that
$$c \geq \frac{\gamma k}{(\gamma - \beta)} \cdot 2\left(1 -  \beta^{\frac{1}{(1 - \gamma)k}}\right).$$
We can bound 
$$\beta^{\frac{1}{(1 - \gamma)k}} = \exp\left(\frac{\ln \beta}{(1 - \gamma)k}\right) \geq 1 + \frac{\ln \beta}{(1 - \gamma)k} = 1 - \frac{\ln(1/\beta)}{(1 - \gamma)k}$$
so $1 - \beta^{\frac{1}{(1 - \gamma)k}} \leq \frac{\ln(1/\beta)}{(1 - \gamma)k}$. Therefore, it suffices to have 
$$c \geq \frac{2\gamma \ln(1/\beta)}{(\gamma - \beta)(1 - \gamma)}.$$
We can choose $\beta < \gamma$ such that the expression on the right is as small as possible. Taking derivatives, we find that it is optimal to set $\beta = \gamma^2$. With this substitution, we are left with
$$c\geq \frac{4 \ln(1/\gamma)}{(1 - \gamma)^2}.$$
The right side is a single variable expression that is easy to optimize numerically. It obtains its minimum value of $9.82163$ when $\gamma \approx 0.28467$ (and $\beta \approx 0.08103$).

However, strictly speaking, the only valid choices for $\gamma$ are those such that $\gamma k$ is an integer. To remedy this,  we can observe that $\frac{4 \ln(1/\gamma)}{(1 - \gamma)^2} \leq 9.8217$ for all $\gamma \in (0.2832, 0.2862)$, and there is a multiple of $\frac{1}{k}$ in this range for all $k \geq 334$. Since all $k \leq 495$ are covered by \Cref{thm:main}, this completes the proof of the theorem. 
\end{proof}

For a smoother way of interpolating between \Cref{thm:main,thm:stable}, one can view the recursive process from the perspective of dynamic programming. We can think of $k'$ as $\gamma k$, the fraction of the committee that we recurse with. Tracing the analysis in the proof of \cref{thm:stable}, for each fixed choice of $k'$, the best choice of $\beta$ is $(\frac{k'}{k})^2$. If we allow $c$ to depend on $k$, then the proof gives us the recurrence
$$c(k) \leq \frac{k'}{k}\cdot c(k') + \frac{4\ln\frac{k}{k'}}{1 - \frac{k'}{k}}.$$
Minimizing over all choices of $k'$, $c(k)$ tends to $$\min_{\gamma \in (0, 1)} \frac{4 \ln(1/\gamma)}{(1 - \gamma)^2} \approx 9.82163.$$

Writing this recurrence in terms of $\alpha(k) = \frac{c(k)}{k}$, we get the following theorem.

\begin{theorem}\label{thm:dp}
Let $\alpha(k)$ be the infimum $\alpha$ such that every election has a committee of size $k$ that is $\alpha$-undominated. Then $\frac{\alpha(k)}{1 - \ln\alpha(k)} \leq \frac{2}{k+1}$, and
\[
\alpha(k) \leq \min_{1 \leq k' < k} \left(\left(\frac{k'}{k}\right)^2  \alpha(k') + \frac{4\ln \frac{k}{k'}}{k - k'}\right).
\]
\end{theorem}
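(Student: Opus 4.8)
The plan is to recognize that \Cref{thm:dp} is essentially a bookkeeping restatement of the results already proved, so the proof amounts to assembling two ingredients and translating notation. First I would establish the inequality $\frac{\alpha(k)}{1 - \ln\alpha(k)} \leq \frac{2}{k+1}$. This follows directly from \Cref{thm:main}: that theorem asserts that whenever $\frac{\alpha}{1 - \ln\alpha} \geq \frac{2}{k+1}$ there is an $\alpha$-undominated committee of size $k$, so the infimum $\alpha(k)$ of admissible thresholds can be no larger than the value $\alpha^*$ solving $\frac{\alpha^*}{1-\ln\alpha^*} = \frac{2}{k+1}$; since $t \mapsto \frac{t}{1-\ln t}$ is increasing on $(0,1]$, $\alpha(k) \leq \alpha^*$ gives $\frac{\alpha(k)}{1-\ln\alpha(k)} \leq \frac{2}{k+1}$. (One should note the mild subtlety that $\alpha(k)$ is an infimum, not necessarily attained, but the inequality is about thresholds that \emph{work}, and any $\alpha$ slightly above $\alpha(k)$ satisfies the hypothesis of \Cref{thm:main}, so the bound passes to the limit by continuity.)

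Next I would prove the recursion. Fix any $k' $ with $1 \le k' < k$, and set $\gamma = k'/k$ and $\beta = \gamma^2 = (k'/k)^2$. The argument is exactly the inductive step in the proof of \Cref{thm:stable}, but carried out with $c(k') := k'\cdot\alpha(k')$ in place of the universal constant $c$: sample a committee $S$ of size $(1-\gamma)k = k - k'$ from the distribution $\Delta$ given by \Cref{lem:low} with $k \leftarrow (1-\gamma)k$, $\alpha \leftarrow 1$, and with the shifted-ReLU activation function $g$ optimized so that $\int_0^1 g\big(x^{(1-\gamma)k}\big)\dif x$ is minimized subject to $g(\beta)=1$; fix $S$ in the support so that at most a $\beta$ fraction of voters have $g(\rank_v(S)) < g(\beta)$; on the remaining voters each candidate $a$ loses at most an $\frac{1}{g(\beta)}\int_0^1 g\big(x^{(1-\gamma)k}\big)\dif x$ fraction; then apply the \emph{definition} of $\alpha(\gamma k)$ to the $\beta n$ voters in $W$ to pick a committee $S'$ of size $\gamma k$ losing at most $\alpha(\gamma k)\cdot \beta$ of all voters against any $a$. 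Combining, $S \cup S'$ has size $k$ and is $\alpha$-undominated for $\alpha = \beta\,\alpha(\gamma k) + \frac{1}{g(\beta)}\int_0^1 g\big(x^{(1-\gamma)k}\big)\dif x$. Plugging in the computation from the proof of \Cref{thm:stable} that the optimal $h(x)=g(x^{(1-\gamma)k})$ gives $\int_0^1 h = 2\big(1 - \beta^{1/((1-\gamma)k)}\big) \le \frac{2\ln(1/\beta)}{(1-\gamma)k}$, and substituting $\beta = \gamma^2$, $\gamma = k'/k$, yields $\alpha \le (k'/k)^2\alpha(k') + \frac{4\ln(k/k')}{k-k'}$. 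Since $\alpha(k)$ is the infimum over thresholds that work, $\alpha(k)$ is at most this value for every choice of $k'$, hence at most the minimum over $k'$.

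The one genuine care-point — and what I expect to be the main obstacle in writing it cleanly — is the treatment of $\alpha(k)$ as an infimum rather than an attained minimum, together with the integrality constraint that $\gamma k$ and $(1-\gamma)k$ must be integers. For the integrality issue, since we are free to choose any integer $k'$ in $\{1,\dots,k-1\}$, setting $\gamma = k'/k$ automatically makes both $\gamma k = k'$ and $(1-\gamma)k = k - k'$ integers, so no separate rounding argument is needed (unlike in \Cref{thm:stable}, where $\gamma$ had to be a specific irrational-ish optimum). For the infimum issue, one argues that for any $\varepsilon > 0$ there is an $\alpha'$-undominated committee of size $\gamma k$ with $\alpha' < \alpha(\gamma k) + \varepsilon$; running the construction with that committee gives an $\alpha$-undominated committee of size $k$ with $\alpha \le (k'/k)^2(\alpha(\gamma k)+\varepsilon) + \frac{4\ln(k/k')}{k-k'}$; letting $\varepsilon \to 0$ shows $\alpha(k)$ (itself an infimum of working thresholds) is bounded by the claimed quantity. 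Everything else is arithmetic already done verbatim in the proof of \Cref{thm:stable}, so the write-up is short: state the two bounds, cite \Cref{thm:main} for the first, and cite the inductive step of \Cref{thm:stable} with $c$ replaced by the $k$-dependent quantity $k'\alpha(k')$ for the second.
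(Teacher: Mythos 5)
Your proof is correct and matches the paper's intended argument: the paper itself derives \Cref{thm:dp} by "tracing the analysis in the proof of \Cref{thm:stable}" with $c$ replaced by the $k$-dependent quantity $k'\alpha(k')$ and $\beta = (k'/k)^2$, and by invoking \Cref{thm:main} for the first inequality. You correctly identify all the points that need care (the infimum vs.\ attained-minimum issue via an $\varepsilon$-argument, the monotonicity of $t \mapsto t/(1-\ln t)$, and the observation that choosing $\gamma = k'/k$ with integer $k'$ automatically resolves the integrality constraint that \Cref{thm:stable} had to address separately).
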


Even still, we find that the first time that the recurrence improves \Cref{thm:main} is when $k = 300$.

\section{Discussion}\label{sec:discussion}

In elections where a Condorcet winner exists, selecting that candidate is generally seen as an uncontroversial choice. In our work, we show that for \cite{elkind2015condorcet}'s natural extension of Condorcet winners to committee selection, a small Condorcet winning set \textit{always} exists. We view this as a result that may be practically applicable. For instance, in awards like the Oscars or Grammys, it's possible to select a group of nominees such that only a few feel a contender was ``snubbed.'' Similarly, in faculty hiring, a department can select a shortlist for interviews without half the faculty urging the consideration of another candidate.

A second advantage of \cite{elkind2015condorcet}'s notion of $\alpha$-undominated sets is that it offers a more continuous measure for evaluating the quality of a committee. This flexibility stands in contrast to the binary axioms that are more typical in the social choice literature. \Cref{thm:main} offers a sliding scale of guarantees for different committee sizes $k$ and thresholds $\alpha$. 

We conclude with several exciting avenues for future work.

\paragraph{Closing the gap.} The obvious remaining question is to close the gap between the upper bounds (\Cref{thm:main,thm:stable}) and lower bounds (\Cref{thm:lb}). We conjecture that the lower bounds are optimal.

\begin{conjecture}
If $\alpha \geq \frac{2}{k+1}$ then $\alpha$-undominated sets of size $k$ always exist.   
\end{conjecture}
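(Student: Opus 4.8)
The conjecture is really a statement about \emph{lossless derandomization}: \Cref{cor:stab} already produces a distribution $\Delta$ over size-$k$ committees with $\mathbb{E}_{S\sim\Delta}\big[\tfrac1n|a\cg S|\big]<\tfrac1{k+1}$ for every candidate $a$, so what must be shown is that \emph{some single} committee is only a factor of two worse, i.e.\ $\tfrac1n|a\cg S|<\tfrac{2}{k+1}$ for all $a$ at once; and the Kronecker lower bound (\Cref{thm:lb}) says this factor of two cannot be improved. The plan is to keep the probabilistic-method skeleton of \Cref{sec:mainthm} — a reference distribution $\Delta$, the $\rank_v$ notion, and the clash ``candidates are low'' (\Cref{lem:low}) against ``a committee is high'' (\Cref{claim:high}) — but to defeat the obstruction identified in \Cref{sec:core}: the reference distribution manufactured by the minimax argument is forced to be a \emph{product} distribution $\mathbf y^k$, and this is exactly what caps the method at $\alpha=\Theta(\tfrac{\log k}{k})$ (see \Cref{ssec:opt-g}) rather than $\Theta(\tfrac1k)$.

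Concretely, the steps I would carry out, in order, are: (1) reformulate the zero-sum game of \Cref{lem:low} so that the defender may choose an \emph{arbitrary, correlated} distribution $\Delta$ over size-$k$ committees, and work out what ``best response to the adversary's marginal over candidates'' means once the shortcut $\Delta=\mathbf y^k$ is off the table — this will require an explicit LP over distributions $\Delta$ together with a matching dual certificate, since the minimax theorem alone no longer hands us a usable response; (2) generalize the activation-function device and the fine per-voter Riemann-sum accounting of \Cref{claim:high} so that they remain valid for a non-product $\Delta$, the target inequality being that every voter's $\rank_v(S)$-profile for the chosen $S$ stacks above the \emph{slope-$1$} profile the paper flags as worst case — for which $\alpha=\tfrac{2}{k+1}$ is precisely the break-even point in the ``area under the curve'' picture; (3) construct the specific $\Delta$: the conjectured extremal instances are Kronecker products of cyclic elections, and for a single $m$-cycle the uniform distribution over \emph{evenly spaced} committees already achieves $\tfrac1k$, so the real content is a coupling of these cyclic distributions across Kronecker factors, applied to the latent product structure of a general instance; (4) feed $\Delta$ into the same contradiction as in the proof of \Cref{thm:meat}.

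The step I expect to be the genuine obstacle is (3): for an \emph{arbitrary} election we have no structural handle comparable to ``it is a Kronecker power of cycles,'' and every tool currently on the table has a proven barrier strictly above $\tfrac{2}{k+1}$ — the product distribution ($\Theta(\tfrac{\log k}{k})$, \Cref{ssec:opt-g}) and the binary-threshold recursion of \Cref{sec:core} ($\approx\tfrac{9.82}{k}$, even after the \Cref{thm:dp} optimization over schedules). So closing the gap seems to demand one of two things: (i) a \emph{rigidity} result showing that any election minimizing the best attainable $\alpha$ for committee size $k$ must, up to adding clones of voters and candidates, be a Kronecker power of cycles, after which a direct computation finishes; or (ii) a genuinely new dual certificate — a potential function on candidate--committee incidences — robust enough to apply to all elections simultaneously. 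Route (i) is attractive because it would pin down the extremal instances exactly, but proving such rigidity for a ``max of sums'' combinatorial objective looks hard; route (ii) avoids rigidity but we currently have no candidate potential.

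As a first concrete target I would settle the case $k=2$ directly, since a proof there is the likeliest place to discover the right certificate. The statement becomes the clean claim that every election admits a pair $\{a,b\}$ such that no candidate is preferred to both by a $\tfrac23$-majority; I would look for a self-contained argument starting from the stable lottery of \Cref{cor:stab} with $k=2$ (so $\sum_v\rank_v(a)<\tfrac n3$) together with a careful case analysis of the induced rank profiles, and in parallel a computer-assisted verification in the spirit of \cite{geist2014finding}, whose SAT/ILP encodings already certify Condorcet dimension $3$ for small instances and could be retargeted to the $\alpha=\tfrac23$ threshold. A confirmed $k=2$ base case, combined with a \Cref{thm:dp}-style recursion, would in turn sharpen the known upper bounds for larger $k$.
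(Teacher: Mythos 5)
There is no proof to compare against: the statement you are addressing is stated in the paper only as a \emph{conjecture} (in the discussion of \Cref{sec:discussion}), and the paper explicitly leaves it open. What the paper actually proves is the weaker tradeoff of \Cref{thm:main} (refined in \Cref{thm:stronger}), the asymptotic bound of \Cref{thm:stable}, and the matching lower bound \Cref{thm:lb}; the gap between $\frac{2}{k+1}$ and these upper bounds is precisely the open problem. Your submission is likewise not a proof but a research plan, and you say so yourself: steps (1)--(3) are unexecuted, and step (3) --- producing, for an \emph{arbitrary} election, a reference distribution (or an extremal/rigidity characterization reducing to Kronecker powers of cycles) strong enough to reach $\alpha=\frac{2}{k+1}$ --- is exactly the missing mathematical content. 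So the honest verdict is: genuine gap, namely the entire construction; nothing in the proposal advances past the barriers the paper already identifies in \Cref{ssec:opt-g} and \Cref{sec:core}.

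Two specific points in your framing are also off. First, in \Cref{lem:low} the defender's strategy space is \emph{already} the set of all distributions $\Delta$ over committees of size at most $k$; the product form $\bm{y}^k$ is not a restriction imposed by the game but a convenient \emph{response} used to upper-bound the game value after the minimax swap, and the real loss occurs there and in the one-shot passage from a distribution that is good in expectation (\Cref{cor:stab}, \Cref{claim:high}) to a single committee. So your step (1), ``allow a correlated $\Delta$,'' does not by itself add any degree of freedom; what is needed is a better response or a different certificate, which you do not supply. Second, the ``lossless derandomization within a factor $2$'' framing is not an argument: the in-expectation bound $\frac{1}{k+1}$ of \Cref{cor:stab} gives no mechanism forcing some fixed committee to be within a factor $2$ of it --- bridging expectation to a single committee is the whole difficulty, and the paper's best such bridge loses a $\Theta(\log k)$ or constant $\approx 9.82$ factor, not $2$. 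Your suggested $k=2$ program (rank-profile case analysis plus SAT search \`a la \cite{geist2014finding}) is a reasonable direction, but note that computer search over small instances can only refute the $\frac23$ claim, never establish it for all elections, so even the base case remains entirely open under your plan.
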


This conjecture would imply that every election has Condorcet dimension at most 3. One particularly interesting special case is when $k = 2$, where gap lies between $\frac23$ and $0.798$. Even to prove an upper bound of $0.99$, we are surprised that the simplest proof we know still goes through the minimax theorem. Finding a more elementary proof would be a natural first step. 

We also note that one reason \textit{not} to believe the conjecture is that there are several different ways to construct lower bounds. Those constructed by \cite{elkind2015condorcet,geist2014finding} are given in \Cref{tab:dim3,tab:dim3-6}. It is plausible that more sophisticated techniques, such as those borrowed from number theory and additive combinatorics to resolve \Cref{q:graph}, could be leveraged to produce stronger lower bounds.

\paragraph{Beyond ranked preferences.} It would be interesting to study whether the techniques we developed can be applied to study approximate stability for other monotone preference structures over committees, such as approval ballots. An intuition is that the ranking setting we studied in this paper is in a sense the hardest, since just one candidate is needed for any blocking committee $S'$. 
We suspect that some of the tools that we developed, especially the use of the activation function $g$, may be useful in other settings as well. %

\paragraph{Algorithmic directions.} Since our proofs are non-constructive, a natural question arises: are there efficient and practical algorithms for finding Condorcet winning sets (or more generally, $\alpha$-undominated sets)? One somewhat unsatisfying answer is that since $\alpha$-undominated sets are constant in size for reasonable values of $\alpha$ (e.g., at most $6$ for $\alpha = \frac12$), the brute-force algorithm is polynomial time. On the other hand, \cite{elkind2015condorcet} show that determining whether a given election has an $\alpha$-undominated set of size $k$ is $\mathsf{NP}$-complete. In our view, what would be most compelling for practical applications would be a simple, intuitive algorithm for reasonable values of $\alpha$ (like $\frac12$). We leave this as a direction for future research.

\anonymize{
\section*{Acknowledgments}
The authors would like to thank Bernhard von Stengel and William Zwicker for helpful discussions. Thanks to Pei-Yang Wu \edit{and the anonymous STOC reviewers for many}{} helpful comments.

Part of this work was done at the Workshop on Fairness in Operations Research at the Bellairs Research Institute. We thank the institute for its hospitality. 

Moses Charikar is supported by a Simons Investigator Award. Prasanna Ramakrishnan is supported by Moses Charikar’s Simons Investigator Award and Li-Yang Tan’s NSF awards 1942123, 2211237, 2224246, Sloan Research Fellowship, and Google Research Scholar Award. Adrian Vetta is supported by
NSERC Discovery Grant 2022-04191.
}

\bibliography{ref}
\bibliographystyle{alpha}

\appendix

\section{Equivalence between \texorpdfstring{$c$}{c}-stable and \texorpdfstring{$\frac{c}{k}$}{c/k}-undominated committees}\label{sec:equiv}

In this section, we formally establish the connection between $c$-stable committees (\Cref{def:approx-stable}) and $\frac{c}{k}$-undominated committees.

Recall that in the stable committee setting, voters have preferences over committees that are induced by their preferences over candidates. In particular, a voter $v$ \textit{strongly prefers} $S'$ over $S$ (denoted $S'\cg_v S$) if and only if $v$ prefers her favorite candidate in $S'$ over her favorite in $S$. 

We will borrow the notation that $\frac{1}{n}|S'\cg_v S|$ is the fraction of voters that strongly prefers $S'$ over $S$. Then restating \Cref{def:approx-stable}, a committee $S$ of size $k$ is \emph{$c$-stable} if for any committee $S'$ of size $k'$, $\frac{1}{n}|S'\cg_v S| \leq c\cdot\frac{k'}{k}$.

\begin{proposition}\label{prop:stable-undominated}
A committee $S$ of size $k$ is $c$-stable if and only if it is $\frac{c}{k}$-undominated.
\end{proposition}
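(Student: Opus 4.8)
The plan is to prove the two directions of the equivalence directly from the definitions, reducing everything to counting voters who strictly prefer a candidate over a committee. The key observation is that for the ranked preference structure, a voter $v$ strongly prefers a committee $S'$ over $S$ if and only if $v$ prefers her favorite candidate $a$ in $S'$ over every candidate in $S$, i.e., $a \cg_v S$. So the set of voters strongly preferring $S'$ over $S$ is exactly $\bigcup_{a \in S'} \{v : a \cg_v S\}$, which gives $\frac{1}{n}|S' \cg_v S| \le \sum_{a \in S'} \frac{1}{n}|a \cg S|$, with equality achievable by a suitable choice of $S'$.

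First I would prove the ``$\Leftarrow$'' direction (undominated implies stable). Suppose $S$ is $\frac{c}{k}$-undominated, so $\frac{1}{n}|a \cg S| < \frac{c}{k}$ for every candidate $a$. For any committee $S'$ of size $k'$, a voter who strongly prefers $S'$ over $S$ must have her favorite candidate $a \in S'$ satisfy $a \cg_v S$; hence $\frac{1}{n}|S' \cg_v S| \le \sum_{a \in S'}\frac{1}{n}|a\cg S| < k' \cdot \frac{c}{k} = c\cdot\frac{k'}{k}$, so $S$ is $c$-stable. Note this direction even gives a strict inequality, which is slightly stronger than what the definition of $c$-stable requires; I should be mildly careful about whether the paper's $c$-stability uses $<$ or $\le$ (the excerpt's \Cref{def:approx-stable} uses $<$, while the restatement in \Cref{sec:equiv} uses $\le$), but either way the implication holds.

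Next I would prove the ``$\Rightarrow$'' direction (stable implies undominated) by contrapositive. Suppose $S$ is not $\frac{c}{k}$-undominated, so there is a candidate $a \in C$ with $\frac{1}{n}|a \cg S| \ge \frac{c}{k}$. Take the singleton committee $S' = \{a\}$, which has size $k' = 1$. Every voter $v$ with $a \cg_v S$ has $a$ as her (unique, hence favorite) candidate in $S'$ and prefers it over her favorite in $S$, so $v$ strongly prefers $S'$ over $S$; thus $\frac{1}{n}|S' \cg_v S| \ge \frac{1}{n}|a \cg S| \ge \frac{c}{k} = c \cdot \frac{k'}{k}$, witnessing that $S$ is not $c$-stable. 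Combining the two directions completes the proof.

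I do not expect any real obstacle here; the only subtlety is bookkeeping around the case $a \in S'$ (where $\frac1n|a \cg S| = 0$ by convention if $a \in S$, which is consistent) and the strict-versus-weak inequality in the two statements of $c$-stability, but these are cosmetic. The content is entirely the identification of ``voter strongly prefers $S'$ over $S$'' with ``some candidate in $S'$ beats all of $S$ for that voter,'' which is immediate from the definition of the induced preference order.
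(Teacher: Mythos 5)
Your proof is correct and is essentially the same argument as the paper's, just packaged slightly differently. The paper shows ``$c$-stable $\Rightarrow$ $\frac{c}{k}$-undominated'' directly by specializing to singleton $S'$, and ``not $c$-stable $\Rightarrow$ not $\frac{c}{k}$-undominated'' by pigeonhole (taking the most frequent favorite in a blocking $S'$). You prove the same two implications with the roles of direct/contrapositive swapped, and replace the pigeonhole step with a union bound $\frac{1}{n}|S' \cg S| \le \sum_{a \in S'} \frac{1}{n}|a \cg S|$; these are the same combinatorial fact seen from the two sides of an averaging argument. Your side remark about the $<$ versus $\le$ discrepancy between \Cref{def:approx-stable} and its restatement in the appendix is a fair observation about the paper's text, but as you note it does not affect the equivalence.
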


\begin{proof}
Suppose that $S$ is a $c$-stable committee. Considering singletons $S' = \{a\}$, we have that for all candidates $a$, $\frac{1}{n}|a\cg S| < \frac{c}{k}$. This means precisely that $S$ is $\frac{c}{k}$-undominated.

Now, suppose that $S$ is \textit{not} a $c$-stable committee. Then there exists some $S'$ such that $\frac{1}{n}|S'\cg S| \geq c\cdot\frac{k'}{k}$. Among the voters for which  $S'\cg_v S$, let $a$ be the most frequent favorite in $S'$. Then it follows that $\frac{1}{n}|a\cg_v S| \geq \frac{c}{k}$, 
which means precisely that $S$ is not $\frac{c}{k}$-undominated.

Thus, the two notions are equivalent.
\end{proof}

Note that the existence of $c$-stable committees implies an upper bound of $2c$ on the Condorcet dimension of any elections.

\section{Lower bounds}\label{sec:lbs}

In this section, we include some negative results for the existence of $\alpha$-undominated sets. \Cref{tab:dim3} is the election given by \cite{elkind2015condorcet} with Condorcet dimension 3, which can be thought of as the Kroneker product of a $3$-cycle and a $5$-cycle.

\begin{table}[ht!]
    \centering
\begin{tabular}{ccccccccccccccc}
\hline
$v_1$ & $v_2$ & $v_3$ & $v_4$ & $v_5$ & $v_6$ & $v_7$ & $v_8$ & $v_9$ & $v_{10}$ & $v_{11}$ & $v_{12}$ & $v_{13}$ & $v_{14}$ & $v_{15}$ \\
\hline
\textcolor{red}{1} & \textcolor{red}{2} & \textcolor{red}{3} & \textcolor{red}{4} & \textcolor{red}{5} & \textcolor{blue}{6} & \textcolor{blue}{7} & \textcolor{blue}{8} & \textcolor{blue}{9} & \textcolor{blue}{10} & \textcolor{darkgreen}{11} & \textcolor{darkgreen}{12} & \textcolor{darkgreen}{13} & \textcolor{darkgreen}{14} & \textcolor{darkgreen}{15} \\
\textcolor{red}{2} & \textcolor{red}{3} & \textcolor{red}{4} & \textcolor{red}{5} & \textcolor{red}{1} & \textcolor{blue}{7} & \textcolor{blue}{8} & \textcolor{blue}{9} & \textcolor{blue}{10} & \textcolor{blue}{6} & \textcolor{darkgreen}{12} & \textcolor{darkgreen}{13} & \textcolor{darkgreen}{14} & \textcolor{darkgreen}{15} & \textcolor{darkgreen}{11} \\
\textcolor{red}{3} & \textcolor{red}{4} & \textcolor{red}{5} & \textcolor{red}{1} & \textcolor{red}{2} & \textcolor{blue}{8} & \textcolor{blue}{9} & \textcolor{blue}{10} & \textcolor{blue}{6} & \textcolor{blue}{7} & \textcolor{darkgreen}{13} & \textcolor{darkgreen}{14} & \textcolor{darkgreen}{15} & \textcolor{darkgreen}{11} & \textcolor{darkgreen}{12} \\
\textcolor{red}{4} & \textcolor{red}{5} & \textcolor{red}{1} & \textcolor{red}{2} & \textcolor{red}{3} & \textcolor{blue}{9} & \textcolor{blue}{10} & \textcolor{blue}{6} & \textcolor{blue}{7} & \textcolor{blue}{8} & \textcolor{darkgreen}{14} & \textcolor{darkgreen}{15} & \textcolor{darkgreen}{11} & \textcolor{darkgreen}{12} & \textcolor{darkgreen}{13} \\
\textcolor{red}{5} & \textcolor{red}{1} & \textcolor{red}{2} & \textcolor{red}{3} & \textcolor{red}{4} & \textcolor{blue}{10} & \textcolor{blue}{6} & \textcolor{blue}{7} & \textcolor{blue}{8} & \textcolor{blue}{9} & \textcolor{darkgreen}{15} & \textcolor{darkgreen}{11} & \textcolor{darkgreen}{12} & \textcolor{darkgreen}{13} & \textcolor{darkgreen}{14} \\
\textcolor{blue}{6} & \textcolor{blue}{7} & \textcolor{blue}{8} & \textcolor{blue}{9} & \textcolor{blue}{10} & \textcolor{darkgreen}{11} & \textcolor{darkgreen}{12} & \textcolor{darkgreen}{13} & \textcolor{darkgreen}{14} & \textcolor{darkgreen}{15} & \textcolor{red}{1} & \textcolor{red}{2} & \textcolor{red}{3} & \textcolor{red}{4} & \textcolor{red}{5} \\
\textcolor{blue}{7} & \textcolor{blue}{8} & \textcolor{blue}{9} & \textcolor{blue}{10} & \textcolor{blue}{6} & \textcolor{darkgreen}{12} & \textcolor{darkgreen}{13} & \textcolor{darkgreen}{14} & \textcolor{darkgreen}{15} & \textcolor{darkgreen}{11} & \textcolor{red}{2} & \textcolor{red}{3} & \textcolor{red}{4} & \textcolor{red}{5} & \textcolor{red}{1} \\
\textcolor{blue}{8} & \textcolor{blue}{9} & \textcolor{blue}{10} & \textcolor{blue}{6} & \textcolor{blue}{7} & \textcolor{darkgreen}{13} & \textcolor{darkgreen}{14} & \textcolor{darkgreen}{15} & \textcolor{darkgreen}{11} & \textcolor{darkgreen}{12} & \textcolor{red}{3} & \textcolor{red}{4} & \textcolor{red}{5} & \textcolor{red}{1} & \textcolor{red}{2} \\
\textcolor{blue}{9} & \textcolor{blue}{10} & \textcolor{blue}{6} & \textcolor{blue}{7} & \textcolor{blue}{8} & \textcolor{darkgreen}{14} & \textcolor{darkgreen}{15} & \textcolor{darkgreen}{11} & \textcolor{darkgreen}{12} & \textcolor{darkgreen}{13} & \textcolor{red}{4} & \textcolor{red}{5} & \textcolor{red}{1} & \textcolor{red}{2} & \textcolor{red}{3} \\
\textcolor{blue}{10} & \textcolor{blue}{6} & \textcolor{blue}{7} & \textcolor{blue}{8} & \textcolor{blue}{9} & \textcolor{darkgreen}{15} & \textcolor{darkgreen}{11} & \textcolor{darkgreen}{12} & \textcolor{darkgreen}{13} & \textcolor{darkgreen}{14} & \textcolor{red}{5} & \textcolor{red}{1} & \textcolor{red}{2} & \textcolor{red}{3} & \textcolor{red}{4} \\
\textcolor{darkgreen}{11} & \textcolor{darkgreen}{12} & \textcolor{darkgreen}{13} & \textcolor{darkgreen}{14} & \textcolor{darkgreen}{15} & \textcolor{red}{1} & \textcolor{red}{2} & \textcolor{red}{3} & \textcolor{red}{4} & \textcolor{red}{5} & \textcolor{blue}{6} & \textcolor{blue}{7} & \textcolor{blue}{8} & \textcolor{blue}{9} & \textcolor{blue}{10} \\
\textcolor{darkgreen}{12} & \textcolor{darkgreen}{13} & \textcolor{darkgreen}{14} & \textcolor{darkgreen}{15} & \textcolor{darkgreen}{11} & \textcolor{red}{2} & \textcolor{red}{3} & \textcolor{red}{4} & \textcolor{red}{5} & \textcolor{red}{1} & \textcolor{blue}{7} & \textcolor{blue}{8} & \textcolor{blue}{9} & \textcolor{blue}{10} & \textcolor{blue}{6} \\
\textcolor{darkgreen}{13} & \textcolor{darkgreen}{14} & \textcolor{darkgreen}{15} & \textcolor{darkgreen}{11} & \textcolor{darkgreen}{12} & \textcolor{red}{3} & \textcolor{red}{4} & \textcolor{red}{5} & \textcolor{red}{1} & \textcolor{red}{2} & \textcolor{blue}{8} & \textcolor{blue}{9} & \textcolor{blue}{10} & \textcolor{blue}{6} & \textcolor{blue}{7} \\
\textcolor{darkgreen}{14} & \textcolor{darkgreen}{15} & \textcolor{darkgreen}{11} & \textcolor{darkgreen}{12} & \textcolor{darkgreen}{13} & \textcolor{red}{4} & \textcolor{red}{5} & \textcolor{red}{1} & \textcolor{red}{2} & \textcolor{red}{3} & \textcolor{blue}{9} & \textcolor{blue}{10} & \textcolor{blue}{6} & \textcolor{blue}{7} & \textcolor{blue}{8} \\
\textcolor{darkgreen}{15} & \textcolor{darkgreen}{11} & \textcolor{darkgreen}{12} & \textcolor{darkgreen}{13} & \textcolor{darkgreen}{14} & \textcolor{red}{5} & \textcolor{red}{1} & \textcolor{red}{2} & \textcolor{red}{3} & \textcolor{red}{4} & \textcolor{blue}{10} & \textcolor{blue}{6} & \textcolor{blue}{7} & \textcolor{blue}{8} & \textcolor{blue}{9} \\
\hline
\end{tabular}
    \caption{An election with Condorcet dimension 3.}
    \label{tab:dim3}
\end{table}

The theorem below is a straightforward generalization of \Cref{tab:dim3}. 

\begin{theorem}[\cite{mathoverflow-lb}]\label{thm:lb}
If $\alpha < \frac{2}{k + 1}$, then there exist elections with no $\alpha$-undominated committees of size $k$.
\end{theorem}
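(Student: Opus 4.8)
\textbf{The plan} is to exhibit, for each $k\ge 1$ and each $\epsilon>0$, an election in which \emph{every} committee of size $k$ is $\bigl(\tfrac{2}{k+1}-\epsilon\bigr)$-dominated; letting $\epsilon\to 0$ then gives the theorem. The election generalizes the composition behind \Cref{tab:dim3}. I would partition the candidates into $k+1$ groups $G_0,\dots,G_k$, each a block of $N$ candidates labelled $(i,y)$ with $i\in\mathbb{Z}_{k+1}$ and $y\in\mathbb{Z}_N$ (where $N$ is a large parameter chosen at the end), and take $(k+1)N$ voters indexed by $(t,s)\in\mathbb{Z}_{k+1}\times\mathbb{Z}_N$: voter $(t,s)$ ranks the candidates lexicographically by the key $\bigl((i-t)\bmod(k+1),\ (y-s)\bmod N\bigr)$, i.e.\ she lists the groups cyclically starting from $G_t$ and, within each group, lists its members cyclically starting from the $s$-th one. (\Cref{tab:dim3} is exactly the case $k=2$, $N=5$.)

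Next I would fix a committee $S$ with $|S|=k$, let $T=\{i:S\cap G_i\neq\emptyset\}$ be the set of occupied groups and $m_i=|S\cap G_i|$; since $\sum_i m_i=k<k+1$, some group is unoccupied. The structural fact to establish is that for a candidate $a$ in group $G_i$, the event $a\cg_{(t,s)}S$ factors into two independent sub-events: an \emph{outer} one — in voter $(t,s)$'s cyclic order of groups, $G_i$ precedes every occupied group other than itself — and an \emph{inner} one — when $i\in T$, within $G_i$ the candidate $a$ precedes every member of $S\cap G_i$ (vacuous when $i\notin T$). Counting shifts, the outer event has probability $\tfrac{d_i+1}{k+1}$, where $d_i$ is the number of consecutive unoccupied groups immediately preceding $G_i$, and, picking $a$ at the far end of a largest inner gap among the $m_i$ points of $S\cap G_i$ on $\mathbb{Z}_N$ (such a gap has length $\ge N/m_i$), the inner event has probability at least $\tfrac1{m_i}-\tfrac1N$. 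Hence for every $i$ some candidate $a\notin S$ in $G_i$ satisfies
\[
\tfrac1n|a\cg S|\ \ge\ \frac{d_i+1}{k+1}\cdot\begin{cases}1 & \text{if } i\notin T,\\[2pt] \tfrac1{m_i}-\tfrac1N & \text{if } i\in T.\end{cases}
\]

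\textbf{The core step} is then a short counting argument. Suppose towards a contradiction that $\max_a\tfrac1n|a\cg S|<\tfrac{2}{k+1}-\epsilon$, with $N\ge\tfrac{2}{(k+1)\epsilon}$ and $N>k$. From unoccupied groups, $\tfrac{d_i+1}{k+1}<\tfrac{2}{k+1}$ forces $d_i=0$ for all $i\notin T$; equivalently, no two unoccupied groups are cyclically consecutive and each unoccupied group is immediately preceded by an occupied one (so occupied groups have $d_i\in\{0,1\}$). From occupied groups, if $d_i=1$ and $m_i=1$ the displayed bound gives $\tfrac1n|a\cg S|\ge\tfrac{2}{k+1}\bigl(1-\tfrac1N\bigr)\ge\tfrac{2}{k+1}-\epsilon$, a contradiction; hence $d_i=1\Rightarrow m_i\ge 2$. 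Let $u=k+1-|T|$ be the number of unoccupied groups. Sending each unoccupied group to its occupied predecessor is injective (distinct groups have distinct predecessors around the cycle), so exactly $u$ occupied groups have $d_i=1$; each of those has $m_i\ge 2$ and the remaining $|T|-u$ occupied groups have $m_i\ge 1$, whence $k=\sum_i m_i\ge 2u+(|T|-u)=u+|T|=k+1$, a contradiction. Thus every size-$k$ committee is $\bigl(\tfrac{2}{k+1}-\epsilon\bigr)$-dominated (and the identical count, giving $\sum m_i\ge k+1$, handles committees of size $<k$ as well).

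\textbf{Where the difficulty lies.} The calculations are elementary, but two points need care. First, one must verify the factorization of $a\cg_{(t,s)}S$ into independent outer/inner events, the exact shift counts $\tfrac{d_i+1}{k+1}$ and $\ge\tfrac1{m_i}-\tfrac1N$ against the cyclic indexing, and that the attacker chosen in each group is genuinely outside $S$. Second — and this is the real content — the design choice that the outer cycle has length \emph{exactly} $k+1$ is what makes the counting work: with $k$ committee members and only $k+1$ group slots, the committee cannot simultaneously keep every empty group isolated and keep every post-empty occupied group light, which is precisely the inequality $\sum_i m_i\ge k+1$. With a longer outer cycle the committee could alternate occupied and empty groups and push the best attacker down to roughly $1/k<\tfrac{2}{k+1}$, so the bound would be lost; getting the length right (and then turning the two local constraints into the global contradiction) is the crux.
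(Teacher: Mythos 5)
Your construction is identical to the paper's (a $(k+1)$-cycle of groups composed with an inner $N$-cycle), and the core counting argument---that $k$ committee members cannot be spread over $k+1$ cyclic group slots while keeping every empty slot isolated and every slot following an empty one doubly occupied---is exactly the one the paper uses. The paper packages it as a direct pigeonhole step (there exists $x$ with $m_x\le 1$ and $m_{x-1}=0$, and then an explicit attacker in $G_x$) rather than your contradiction from per-group lower bounds, but the two are logically equivalent and your version is correct.
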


\begin{proof}

Let $t$ be a large positive integer. We will generalize the instance in \Cref{tab:dim3}, multiplying a $(k + 1)$-cycle by a $t$-cycle (instead of a 3-cycle and 5-cycle). 

More formally, we will construct an election with $n = m = (k + 1) t$ voters and candidates. Let $[N] = \{0, 1, \dots, N - 1\}$ denote the residue classes modulo $N$, treated with the order $0 < 1 < \cdots < N - 1$.

Each voter and each candidate is associated with a unique tuple in $[k + 1]\times [t]$. The voter $(p, q)$ ranks the candidates in reverse lexicographical order after shifting the first entry down by $p$ and the second entry down by $q$ (modulo $k + 1$ and $t$ respectively). That is, 
$$(x, y) \cg_{(p, q)} (x', y') \quad \text{ if } \quad x - p  < x' - p \text{ or } (x = x' \text{ and } y - q  < y' -  q).$$ 

Now we claim that for any committee $S$ of $k$ candidates, there exists another candidate that is preferred by a $\frac{2}{k+1}\big(1 - \frac1t\big)$ fraction of the voters.

By the pigeonhole principle, there must exist $x \in [k + 1]$ such that $S$ contains at most one element of the form $(x, \cdot)$ and no element of the form $(x - 1, \cdot)$. If no $(x, \cdot)$ is in $S$, then $(x, 0)$ is preferred over $S$ by all of the $\frac{2}{k+1}$ fraction of voters of the form $(x-1, \cdot)$ and $(x, \cdot)$. If there is some $(x,y) \in S$, then $(x, y - 1)$ is preferred over $S$ by all of the $\frac{2}{k+1}\big(1 - \frac1t\big)$ voters of the form $(x-1, y')$ and $(x, y')$ for $y'\neq y$.  

To conclude, for any $\alpha < \frac{2}{k+1}$, there exists $t$ such that $\alpha \leq \frac{2}{k+1}\big(1 - \frac1t\big)$. Applying the construction above for such a choice of $t$, the result follows.
\end{proof}

We remark that
\cite{geist2014finding} put forth a minimal election with Condorcet dimension $3$. \Cref{tab:dim3-6} is a slightly modified version of their instance, made to be more interpretable. One can think of it as duplicating two $3$-cycles and weaving them together.
 
\begin{table}[h]
\centering
\begin{tabular}{cccccc}
\hline
$v_1$ & $v_2$ & $v_3$ & $v_4$ & $v_5$ & $v_6$\\
\hline
\textcolor{red}{1} & \textcolor{red}{2} & \textcolor{red}{3} & \textcolor{blue}{4} & \textcolor{blue}{5} & \textcolor{blue}{6}\\
\textcolor{blue}{4} & \textcolor{blue}{5} & \textcolor{blue}{6} & \textcolor{red}{3} & \textcolor{red}{1} & \textcolor{red}{2}\\
\textcolor{red}{2} & \textcolor{red}{3} & \textcolor{red}{1} & \textcolor{blue}{6} & \textcolor{blue}{4} & \textcolor{blue}{5}\\
\textcolor{red}{3} & \textcolor{red}{1} & \textcolor{red}{2} & \textcolor{red}{1} & \textcolor{red}{2} & \textcolor{red}{3}\\
\textcolor{blue}{6} & \textcolor{blue}{4} & \textcolor{blue}{5} & \textcolor{red}{2} & \textcolor{red}{3} & \textcolor{red}{1}\\
\textcolor{blue}{5} & \textcolor{blue}{6} & \textcolor{blue}{4} & \textcolor{blue}{5} & \textcolor{blue}{6} & \textcolor{blue}{4}\\
\hline
\end{tabular}
\caption{A minimal election with  Condorcet dimension 3.}
\label{tab:dim3-6}
\end{table}

\end{document}